\documentclass[12pt]{article}
\usepackage[english]{babel}

\usepackage[sc]{mathpazo}
\usepackage[nottoc,notlot,notlof]{tocbibind}
\usepackage{amssymb,mathrsfs,amsfonts}
\usepackage{amsmath, amsthm, xpatch, dsfont,ushort,sgamevar, float, framed, multirow, array, makecell, comment, url, graphicx, subcaption, mathtools, fnpct, setspace, pifont, bm, bbm,accents}
\usepackage{nicefrac, xfrac}
\usepackage{epstopdf}
\usepackage[normalem]{ulem}

\usepackage[final, babel, activate={true,nocompatibility}]{microtype}
\DeclarePairedDelimiterX{\infdivx}[2]{(}{)}{%
#1\;\delimsize\|\;#2%
}

\usepackage{lipsum}
\usepackage[utf8]{inputenc} 
\usepackage[T1]{fontenc} 
\usepackage{titlesec}
\usepackage{thmtools}

\usepackage[round,comma,authoryear,longnamesfirst,sort&compress]{natbib}
\usepackage[bottom, multiple]{footmisc}
\usepackage[top=1.25in, bottom=1.25in, left=1.25in, right=1.25in]{geometry}
\usepackage{xcolor}
\definecolor{navyblue}{rgb}{0.0, 0.13, 0.5}
\definecolor{amaranth}{rgb}{0.9, 0.17, 0.31}
\definecolor{brightpink}{rgb}{1.0, 0.0, 0.5}
\definecolor{cgreen}{rgb}{0.0, 0.42, 0.24}
\definecolor{cobalt}{rgb}{0.0, 0.28, 0.67}
\definecolor{coquelicot}{rgb}{1.0, 0.22, 0.0}
\definecolor{deepcarrotorange}{rgb}{0.91, 0.41, 0.17}
\definecolor{portlandorange}{rgb}{1.0, 0.35, 0.21}
\definecolor{burntorange}{rgb}{0.8, 0.33, 0.0}
\definecolor{dukeblue}{rgb}{0.0, 0.0, 0.61}
\definecolor{brickred}{rgb}{0.8, 0.25, 0.33}
\definecolor{brown(web)}{rgb}{0.65, 0.16, 0.16}
\definecolor{ceruleanblue}{rgb}{0.16, 0.32, 0.75}

\usepackage{hyperref}
\hypersetup{
colorlinks=true,
linkcolor=black,
filecolor=black,
citecolor=black,
urlcolor=navyblue,
}
\usepackage[capitalize,noabbrev,nameinlink]{cleveref}

\DeclareMathAlphabet{\mathsf}{OT1}{\sfdefault}{m}{n}
\SetMathAlphabet{\mathsf}{bold}{OT1}{\sfdefault}{b}{n}

\usepackage{enumitem}

\renewcommand\thmcontinues[1]{Continued}

\newtheorem{claim*}{Claim}[]

\newcommand*\diff{\mathop{}\!\mathrm{d}}
\usepackage{tocloft}

\newtheorem{prop*}{Proposition}[]
\newtheorem{bprop*}{Proposition}
\newtheorem{bthm*}{Theorem}
\newtheorem{thm*}{Theorem}[]

\newtheorem{ax*}{Axiom}[]
\newtheorem{lem*}{Lemma}[]
\newtheorem{rem*}{Remark}[]
\newtheorem{asp*}{Assumption}[]
\newtheorem{cor*}{Corollary}[]
\newtheorem{pty*}{Property}[]
\newtheorem{def*}{Definition}[]
\newtheorem{cond*}{Condition}[]

\newtheorem*{propnonumber*}{Proposition}

\usepackage{caption}
\usepackage[labelfont=bf]{caption}

\usepackage{tikz}
\usepackage{pgfplots}
\usetikzlibrary{automata,positioning}
\usetikzlibrary{arrows,intersections}

\newtheorem{examplex}{Example}
\newenvironment{eg*}
{\pushQED{\qed}\examplex}
{\popQED\endexamplex}

\newcommand{\ra}{\rightarrow}

\newcommand{\ve}{\varepsilon}

\newcommand{\bP}{{\bf P}}

\newcommand{\bE}{\mathbf{E}}
\newcommand{\bR}{\mathbf{R}}

\newcommand{\expo}{ \operatorname{e}}
\newcommand{\gl}{\lambda}
\newcommand{\ga}{\alpha}
\newcommand{\gb}{\beta}
\newcommand{\gd}{\delta}
\newcommand{\gs}{\sigma}
\newcommand{\Gs}{\Sigma}
\newcommand{\gam}{\gamma}
\newcommand{\gk}{\kappa}
\newcommand{\egg}{e.g.,~}
\newcommand{\ie}{i.e.,~}

\newcommand{\BR}{\mathrm{BR}}

\newcommand{\KL}[2]{D(#1  \parallel #2 )}

\usepackage[toc]{appendix}

\renewcommand{\bibname}{References}

\renewcommand{\texttt}[1]{%
\begingroup
\ttfamily
\begingroup\lccode`~=`/\lowercase{\endgroup\def~}{/\discretionary{}{}{}}%
\begingroup\lccode`~=`[\lowercase{\endgroup\def~}{[\discretionary{}{}{}}%
\begingroup\lccode`~=`.\lowercase{\endgroup\def~}{.\discretionary{}{}{}}%
\catcode`/=\active\catcode`[=\active\catcode`.=\active
\scantokens{#1\noexpand}%
\endgroup
}

\makeatletter
\renewenvironment{proof}[1]
[\proofname]{\par\pushQED{\qed}\normalfont%
\topsep6\p@\@plus6\p@\relax
\trivlist\item[\hskip\labelsep\bfseries#1\@addpunct{.}]%
\ignorespaces}{%
\popQED\endtrivlist\@endpefalse
}
\makeatother

\pgfplotsset{compat=1.14}
\usetikzlibrary{calc}
\usetikzlibrary{matrix}
\usetikzlibrary{positioning}
\usetikzlibrary{decorations.pathreplacing}
\allowdisplaybreaks

\usepackage{etoolbox}
\patchcmd{\quote}{\rightmargin}{\leftmargin 4em \rightmargin}{}{}

\makeatletter
\AddToHook{cmd/appendix/before}{\def\cref@section@alias{appendix}\def\cref@subsection@alias{appendix}}
\makeatother

\title{Reputation effects: robustness and fragility}
\author{Allen Vong\thanks{National University of Singapore, \href{mailto:allenv@nus.edu.sg}{\texttt{allenv@nus.edu.sg}}. I thank Navin Kartik, Larry Samuelson, and Satoru Takahashi for helpful comments.}}
\date{\today\thanks{Click \href{https://allenvong.net/s/topology.pdf}{here} for the latest version.}}

\begin{document}
\maketitle


\begin{abstract}
Reputation building depends on how evidence is parsed. I revisit the canonical reputation framework in which a long-lived player, either strategic or committed to play a fixed action distribution, faces a sequence of short-lived players. I ask whether reputation effects are robust to short-lived players' slight misspecification of the signal process under commitment. I show that the relevant robustness test is entropy-rate control of likelihoods along arbitrarily long reputation-building histories. If a misspecification satisfies this control, reputation effects survive. If it fails, they can collapse even when the misspecification is invisible on every fixed finite horizon.\\


\noindent {\it Keywords: reputation effects, repeated games, misspecification, robustness.}
\end{abstract}



\thispagestyle{empty}
\pagebreak
\setcounter{page}{1}

\section{Introduction}

A reputation for commitment is valuable. A firm thought to be committed to providing quality can charge consumers more; an incumbent thought to be committed to fighting entry can deter challengers. A central insight of the reputation literature, pioneered by \citet{kreps1982rational}, \citet{kreps1982reputation}, and \citet{milgrom1982predation}, is that even a slight possibility of commitment can have large equilibrium effects. \citet{fudenberg1989,fudenberg1992} develop the canonical framework for studying these reputation effects.

In the canonical framework, a long-lived player faces a sequence of short-lived players. The long-lived player is either a normal type who acts strategically or a commitment type who plays a fixed action distribution; he privately knows his type. In every Nash equilibrium, a sufficiently patient normal type can secure a payoff arbitrarily close to what he would obtain if the short-lived players treated him as the commitment type. He does so by deviating to persistent play of the commitment action distribution. This lower bound can strictly exceed what he can obtain in any complete-information equilibrium, and it holds however small the positive prior probability of the commitment type.

Reputation effects depend on how observable evidence of commitment behavior is interpreted. The canonical framework assumes that short-lived players evaluate public signals about the long-lived player’s hidden actions using the correct, type-independent signal structure: a commitment type and a normal type who persistently plays the commitment action (distribution) generate the same signal process. Requiring short-lived players to perceive the signal process conditional on a commitment type correctly is a substantive restriction. Because commitment types are latent and rarely observed, short-lived players may hold them to the wrong evidentiary standard. In particular, they may attribute to the commitment type a different signal process from that generated by a normal type who persistently plays the commitment action, perhaps because they associate commitment with distinctive characteristics such as organizational routines, quality control, or customer recovery. A firm committed to high effort, for example, may still receive several complaints in a row; consumers may nevertheless treat such a streak as evidence against the firm's commitment.  

An important robustness question arises: do reputation effects survive when short-lived players slightly misspecify the signal process for the commitment type? I address this question by extending the canonical framework to allow for this misspecification.

A natural robustness test is finite-history correctness: does the misspecified process approximate the true distribution over histories on sufficiently many fixed finite horizons? My results show that this test fails in the strongest possible sense: there exists a sequence of subjective commitment-type processes that eventually agrees with the true process on every fixed finite horizon, yet for every process in the sequence, the patient normal type's highest equilibrium payoff is bounded above by the complete-information benchmark. Thus, while an arbitrarily small prior probability of a commitment type has a reputation effect of raising the patient normal type's lowest equilibrium payoff to the commitment level, a slight misspecification of the commitment-type signal process, albeit undetectable on every fixed finite horizon, can eliminate the resulting reputation effect.

My results identify entropy-rate loss as the relevant robustness criterion. This is the asymptotic per-period likelihood loss incurred when histories generated by the commitment-action deviation underlying the canonical reputation bound are evaluated under the misspecified commitment-type signal process. If this loss is small, the reputation bound is approximately preserved; as it vanishes, the bound is recovered. Entropy-rate loss is not imposed for analytical convenience. I show that it is the likelihood loss incurred by the normal type's reputation-building deviation when the resulting histories are evaluated under misspecification.

The mechanism underlying my results is the payoff-relevant horizon. As the normal-type long-lived player becomes patient, the histories that are payoff-relevant for his commitment-action deviation become arbitrarily long. Finite-history correctness does not control these histories, but entropy rate does. The conclusions of my results are uniform over equilibria: the robustness result gives a lower bound on the patient normal-type long-lived player's payoff in every equilibrium, while the collapse result bounds even his highest equilibrium payoff.

Together, my results expose a tension at the core of reputation effects. What makes reputation effects powerful is the long-lived player's patience, which allows the benefits of imitating the commitment type over a long horizon to outweigh the costs created by a low prior probability of commitment. At the same time, it makes reputation effects fragile to misspecifications that impose restrictions on long signal histories because reputation effects are sensitive to how short-lived players evaluate these histories.


The distinction between entropy-rate control and finite-history correctness is economically substantive. I illustrate it through different ways the short-lived players may misspecify the evidence generated by the commitment type. Reputation effects survive when they make small errors about signal frequencies, initial conditions, model uncertainty, or serial dependence, because these distortions impose little long-run likelihood loss on reputation-building histories. In contrast, reputation effects can collapse, for example, when they perceive that a committed firm can generate only finitely many complaints, never fail an audit, or never perform poorly within a review block. Such standards may look innocuous on almost every fixed finite horizon but eventually rule out histories that arise under the commitment-action deviation.

My results also separate attainability of the subjective commitment-type signal process from plausibility of commitment behavior. In the canonical, correctly specified framework, the standard intuition underlying the reputation bound is that histories generated by the commitment-action deviation make the commitment type more plausible, eventually inducing short-lived players to respond as if the commitment action were being played; see, \egg \citet[Section 2.4]{mailath2015reputations}. My robustness result implies that under misspecification, the reputation bound can survive even when no behavior of the normal type generates the subjective commitment-type signal process and posterior belief in the commitment type is negligible in discounted average. What matters is whether histories generated by the commitment-action deviation remain sufficiently likely under that process.





\paragraph{Related literature.} \citet{jehiel2012reputation} are the first to introduce misspecification into the canonical reputation framework. In their model, misspecification arises because short-lived players interpret signals through fixed analogy-based models; reputation effects persist in equilibrium. My departure is different. Here, players are Bayesian. The only perturbation of my model from the canonical setup, and therefore the only source of both robustness and failure of reputation effects, is the short-lived players' misspecification of the commitment-type signal process. The question is not whether reputation survives a particular behavioral model of inference, but which forms of misspecification preserve reputation effects.


\citet{gossner2011simple} introduces entropy methods to the reputation literature for a primarily technical purpose: relative entropy yields explicit payoff bounds in a fixed statistical environment. In contrast, I perturb the statistical environment itself and ask which subjective commitment-type signal processes preserve the reputation payoff bound. By identifying entropy rate as the relevant robustness criterion, my results give entropy a conceptual role in the literature beyond its traditional technical use.

Recent work has extended the reputation bound in correctly specified environments where reputation building is more difficult than in the canonical model. In \citet{pei2020reputation}, the long-lived player is privately informed not only about his type, but also about a payoff-relevant state, creating a tension between reputation building and signaling about the state. In \citet{luo2025marginal}, the long-lived player observes private signals before taking actions observable to short-lived players, so the marginal distribution of actions does not identify his strategy. My paper instead keeps the canonical structure and studies robustness to misspecification.

\citet{ely2026ruth} studies a repeated game with reputation concerns and misspecification. His model departs from the canonical setup: all players are long-lived and there is no commitment type. Misspecification arises because one player holds a fixed incorrect belief about her signal structure. Reputation concerns stem from other players' beliefs about her signal structure.

The misspecification literature studies inference under incorrect subjective models and largely focuses on individual and social learning; see \citet{bohren2025misspecified} for a survey. A central theme of this literature is whether learning outcomes are robust to misspecification. For individual learning, \citet{berk1966limiting} shows a form of robustness. For social learning, by contrast, robustness obtains in some settings (\egg \citealp{bohren2021learning}) but fails in others (\egg \citealp{frick2020misinterpreting}).



\section{A motivating example}
\label{sec:motivatingexample}

In this section I present an example that previews the paper's main distinction: two misspecifications can both look small, but have opposite payoff implications because only one preserves the likelihood of long reputation-building histories.

Consider a product-choice game taken from \citet[Section 2]{cripps2004imperfect}. A long-lived firm with discount factor $\gd \in (0,1)$ interacts with a sequence of short-lived consumers, each of whom lives for one period. In each period, the firm and the consumer simultaneously choose actions in the game
\begin{align}\label{eq:matrix}
\begin{array}{c|cc}
& b_h & b_l \\
\hline
a_h & 2,3 & 0,2 \\
a_l & 3,0 & 1,1
\end{array}
\end{align}
with the row player being the firm and the column player being the consumer. The firm chooses whether to exert high effort, $a_h$, or low effort, $a_l$. The consumer chooses whether to buy a high-priced product, $b_h$, or a low-priced product, $b_l$. The consumer prefers the high-priced product if the firm exerts high effort but prefers the low-priced product otherwise. The firm prefers low effort to high effort. This stage game has a unique and strict Nash equilibrium, $(a_l,b_l)$.

The firm's action is hidden from consumers but monitored through a public signal $y \in \{y_h,y_l\}$, where $y_h$ represents a high-quality signal and $y_l$ represents a low-quality signal. Conditional on the firm's action $a$, the signal distribution is
\begin{align}
\Pr(y = y_h | a) = 1 - \Pr(y = y_l | a) =
\begin{cases}
~p, \quad & \text{if } a = a_h, \\
~p', \quad & \text{if } a = a_l,
\end{cases}
\quad 0 < p' < p < 1. \label{eq:examplemonitoring}
\end{align}
Each consumer's action is hidden from future consumers.

As is standard, the consumer's payoffs in \eqref{eq:matrix} are ex ante payoffs induced by the signal structure \eqref{eq:examplemonitoring} from realized payoffs given her action and the signal. Her realized payoff from taking action $b_h$ is $3(1-p')/(p-p')$ if the signal is $y_h$ and is $-3p'/(p-p')$ otherwise; her payoff from taking action $b_l$ is $1 + (1-p')/(p-p')$ if the signal is $y_h$ and is $1-p'/(p-p')$ otherwise.

Let $u_t$ denote the firm's realized payoff in period $t$. In the repeated game, the firm's realized payoff is $(1-\gd) \sum_{t=0}^{\infty} \gd^t u_t$. If the firm were able to commit to exerting high effort in each period and thereby induce consumers to buy the high-priced product, it would achieve a payoff of $2$. The firm, however, has no commitment power. In this game, Nash equilibrium, perfect public equilibrium, and sequential equilibrium are outcome-equivalent, because the signal distribution \eqref{eq:examplemonitoring} has full support and a product structure \citep[Theorem 5.2]{fudenberg1994efficiency}. By standard arguments \citep{abreu1990toward}, the firm's Nash equilibrium payoff is at most
\begin{align}
\label{eq:CIupperbound}
\max\!\left( 1, 2 - \frac{1-p}{p-p'} \right)\!,
\end{align}
and if $\gd$ is sufficiently close to one, this bound is attainable in equilibrium. This bound is strictly lower than the commitment payoff $2$.

Now suppose there is incomplete information. With probability $\mu \in (0,1)$, the firm is a commitment type who always exerts high effort; otherwise, it is a normal type who behaves strategically, flexibly choosing between high and low effort in each period. The firm privately knows its type. \citet{fudenberg1992} show that in this setting, for any $w<2$, if $\gd$ is sufficiently close to one, the normal-type firm's payoff is at least $w$ in every Nash equilibrium. The firm secures this payoff by deviating to persistently exert high effort. This is the reputation effect: even a small probability of a commitment type allows a sufficiently patient firm to obtain a payoff arbitrarily close to the commitment payoff in every Nash equilibrium, earning more than it can under complete information.

My analysis asks whether the reputation effect is robust to misspecification of the commitment-type signal process. Suppose consumers hold the commitment type to an erroneously high evidentiary standard, perhaps because they associate commitment with exceptional quality control or customer recovery. To illustrate my main results, consider two particular forms of misspecification:
\begin{enumerate}\itemsep0em
\item Consumers believe that the commitment type produces signal $y_h$ with probability $p+\ve$, where $\ve \in (0,1-p)$, and signal $y_l$ otherwise.
\item Consumers view a string of $n$ consecutive signals $y_l$ as inconsistent with commitment. More precisely, suppose their subjective commitment-type signal process follows the high-effort distribution, namely the distribution \eqref{eq:examplemonitoring} conditional on $a=a_h$, in each period except after histories ending in $n-1$ consecutive signals $y_l$. At such histories, another signal $y_l$ is assigned probability zero and signal $y_h$ is assigned probability one.
\end{enumerate}

Both misspecifications can appear small: the first when $\ve$ is small, and the second when $n$ is large. Yet my main results imply that they have opposite payoff implications. In the first case, the reputation effect is robust: as $\ve \ra 0$, every Nash equilibrium gives the patient normal-type firm a payoff arbitrarily close to the commitment payoff $2$. In the second case, the reputation effect collapses: for every $n$, however large, the patient normal-type firm's highest equilibrium payoff is no larger than its highest complete-information equilibrium payoff in \eqref{eq:CIupperbound}.

Intuitively, in the second case, for each $n$, the subjective commitment-type signal process assigns probability zero to a block of $n$ consecutive low-quality signals. Conditional on the normal type, such a block realizes with positive probability and occurs eventually with probability one, at which point the commitment type is falsified by the short-lived players. The reputation effect therefore collapses, because the patient normal type views the stage payoffs before falsification as negligible. Thus the misspecification vanishes on every fixed finite horizon as $n \ra \infty$, yet it destroys reputation effects by ruling out the long histories generated with probability one by the normal type’s commitment-action deviation.

The first case is different. There, histories generated by persistent high effort receive nearly the correct likelihood under the subjective commitment-type signal process, with a per-period likelihood loss (from evaluating the histories using the subjective process rather than the true one) that vanishes as $\ve \ra 0$. The reputation bound survives for this reason.

The lesson is that the relevant robustness criterion for reputation effects under misspecification is not likelihood agreement between the misspecified process and the true process over finite-history samples. It is whether the misspecified process preserves likelihoods on arbitrarily long histories generated by the normal type's commitment-action deviation. Misspecifications that preserve reputation effects must control these long-history likelihood comparisons.

\section{Model}
\label{sec:model}

I now present the general model. The only departure from the canonical reputation framework is that short-lived players may use a misspecified signal process for the commitment type.

Time $t=0,1,\ldots$ is discrete and the horizon is infinite. A long-lived player 1 (he) with discount factor $\gd\in(0,1)$ interacts with a sequence of short-lived players 2 (she), one in each period. Player 1 has a private type $\xi\in\Xi:=\{\xi^0,\hat\xi\}$, where $\xi^0$ denotes a normal type and $\hat\xi$ denotes a commitment type.\footnote{My main results extend with finitely many commitment types in a straightforward manner.} Let $\mu_0\in(0,1)$ be the prior probability that he is a commitment type.

In each period $t$, player 1 and player 2 simultaneously take actions $a_t\in A$ and $b_t\in B$. Player 1's action $a_t$ is hidden and monitored through a public signal $y_t \in Y$ according to a distribution $\rho(\cdot|a_t)$. Suppose that each of the sets $A$, $B$, and $Y$ is finite and contains at least two elements. Monitoring is imperfect and has full support: $\rho(y|a)>0$ for all $(y,a)\in Y\times A$. Player 1's ex ante stage payoff is $u:A\times B\ra\bR$. Player 2's realized payoff depends on her action and the public signal, given by $\tilde v:B\times Y\ra\bR$; her ex ante stage payoff is $v(a,b)=\bE_{y\sim\rho(\cdot|a)}[\tilde v(b,y)]$. I extend $\rho$, $u$, $\tilde v$, and $v$ to mixed actions in the usual way. For $\ga\in\Delta(A)$, write $
\rho_\ga:=\sum_{a\in A}\ga(a)\rho(\cdot|a)$. Let $\ga_t\in\Delta(A)$ and $\gb_t\in\Delta(B)$ denote (possibly degenerate) mixed actions.

In each period $t$, $h_t\in Y^t$ denotes the (public) history, consisting of all past signals.\footnote{By assumption, each short-lived player's action is hidden from future short-lived players; my results are unaffected if this is not the case.} Player 1's behavior strategy $\gs^1_t:Y^t\times\Xi\ra\Delta(A)$ maps the history and his type to a mixed action. The commitment type must play some fixed mixed action: $\gs^1_t(\cdot,\hat\xi)=\hat\ga$.\footnote{This definition of strategy precludes the normal-type player 1 from conditioning his action choice on his past private actions; this is innocuous for my results. In any equilibrium, because the normal-type player 1 best replies to short-lived players whose strategies depend only on the history, he has a best reply that depends only on the history.} Player 2's strategy $\gs^2_t:Y^t\ra\Delta(B)$ maps the history to a mixed action. I say that player 2's mixed action $\gb$ is her optimal action given predictive signal belief $q\in\Delta(Y)$ if $\gb\in\arg\max_{\gb'\in\Delta(B)}\bE_{y\sim q}[\tilde v(\gb',y)]$. Let $\BR^2(q)$ be the set of such optimal actions.


Let $\Omega:=\Xi\times(A\times B\times Y)^\infty$ denote the set of outcomes, including player 1's type, actions, and signals. Given $\mu_0$, a strategy profile $\gs\equiv(\gs^1_t,\gs^2_t)_{t=0}^{\infty}$ induces a true probability distribution $\bP^\gs$ over $\Omega$. Let $H:=Y^\infty$ denote the set of infinite signal histories, endowed with the product topology and its Borel sigma-algebra. Let $\Delta(H)$ denote the set of probability measures on $H$. For any $P\in\Delta(H)$ and $T=1,2,\dots$, let $P_T$ denote the marginal of $P$ on $Y^T$. Let $\widehat P\in\Delta(H)$ denote the signal process conditional on the commitment type; for each $T=1,2,\dots$ and $h_T = (y_0,\dots,y_{T-1})$, $\widehat P_T(h_T) = \prod_{t=0}^{T-1}\rho_{\hat\ga}(y_t)$.

The short-lived players may misspecify the signal process conditional on the commitment type: instead of $\widehat P$, they perceive a subjective commitment-type signal process $F\in\Delta(H)$;\footnote{The process $F$ is a primitive of the model and hence understood by the long-lived player. My main results extend to the case in which the long-lived player is uncertain about $F$, knows only that it belongs to an ambiguity set, and evaluates payoffs according to the worst case in that set.} let $F_T$ denote its marginal on $Y^T$ for each $T$. Under correct specification, $F=\widehat P$; under misspecification, the two need not coincide. I impose no independence, Markovian, or full-support restriction on $F$. Thus while the true process $\widehat P$ is i.i.d., the short-lived players may misspecify either its one-period signal distribution conditional on the commitment type or the serial dependence of these signals, as in the examples in \cref{sec:motivatingexample}.


Given discount factor $\gd$ and subjective commitment-type signal process $F$, let $\Gs(\gd,F)$ denote the set of (Nash) equilibria. For any equilibrium $\gs\in\Gs(\gd,F)$, denote the normal-type player 1's ex ante payoff by
\begin{align*}
U(\gs;\gd,F)
:=
\bE_\gs\!\left[
(1-\gd)\sum_{t=0}^{\infty}\gd^t u(a_t,b_t)
\middle|\xi^0
\right]\!,
\end{align*}
where $\bE_\gs[\cdot]$ denotes expectation with respect to the true measure $\bP^\gs$. Let
\begin{align*}
\underline W(\gd;F)
:=
\inf_{\gs\in\Gs(\gd,F)}U(\gs;\gd,F)
\quad \text{and} \quad
\overline W(\gd;F)
:=
\sup_{\gs\in\Gs(\gd,F)}U(\gs;\gd,F)
\end{align*}
denote the infimum and supremum of the normal type's payoff over all equilibria.

\section{Benchmark: canonical reputation effects}
\label{sec:canonical}

Before studying misspecification, I recall the canonical reputation bound and the long-lived player's deviation that generates it. This benchmark identifies the likelihood comparison that the rest of the paper studies.

Let $\Gs^{CI}(\gd)$ be the set of equilibria in the complete-information version of the canonical framework, in which the long-lived player is commonly known to be the normal type. Let $U^{CI}(\gs;\gd)$ be his payoff in equilibrium $\gs\in\Gs^{CI}(\gd)$, and denote the patient long-lived player's highest complete-information equilibrium payoff by
\begin{align}
\label{eq:barWCIa}
\overline W^{CI}
:=
\limsup_{\gd\ra1}
\sup_{\gs\in\Gs^{CI}(\gd)}
U^{CI}(\gs;\gd).
\end{align}
This limiting upper bound is equal to \eqref{eq:CIupperbound} in the example in \cref{sec:motivatingexample}.

Write $\hat\rho:=\rho_{\hat\ga}$ as the true one-period signal distribution conditional on the commitment type. Define the commitment payoff associated with $\hat\ga$ by
\begin{align}
\label{eq:R}
R
:=
\min_{\gb\in \BR^2(\hat \rho)}
u(\hat\ga,\gb).
\end{align}
This is player 1's worst payoff from playing $\hat\ga$ against player 2's best response to the induced signal distribution $\hat\rho$.

\begin{bthm*}[\citealp{fudenberg1992}; \citealp{gossner2011simple}]
\label{thm:fl92}
It holds that
\begin{align}
\label{eq:FLbound}
\liminf_{\gd\ra1}
\underline W(\gd;\widehat P)
\ge
R.
\end{align}
\end{bthm*}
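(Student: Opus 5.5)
The plan is to follow the entropy argument of \citet{gossner2011simple}. Fix $\gd$, an equilibrium $\gs\in\Gs(\gd,\widehat P)$, and $\eta>0$. Let the normal type deviate to the strategy $\hat\gs^1$ of playing $\hat\ga$ after every history. Since $\gs$ is an equilibrium, $U(\gs;\gd,\widehat P)$ is at least the normal type's payoff from $\hat\gs^1$ against $\gs^2$. It therefore suffices to bound that deviation payoff below by $R-\eta$ for all $\gd$ close to one, uniformly over equilibria. Under the deviation, the true signal process is exactly $\widehat P$. Let $Q\in\Delta(H)$ be the short-lived players' equilibrium predictive process, that is, the marginal on signals of $\bP^\gs$. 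Then $Q=\mu_0\widehat P+(1-\mu_0)P^0$, where $P^0$ is the signal process generated by the normal type under $\gs^1$. The grain-of-truth bound gives $\widehat P_T\le Q_T/\mu_0$, so $\KL{\widehat P_T}{Q_T}\le -\log\mu_0$ for every $T$.

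The second step uses the chain rule for relative entropy. Let $q_t(\cdot|h_t)$ be the one-step predictive distributions under $Q$. Then
\begin{align*}
\KL{\widehat P_T}{Q_T}=\bE_{\widehat P}\Big[\sum_{t=0}^{T-1}\KL{\hat\rho}{q_t(\cdot|h_t)}\Big],
\end{align*}
so $\bE_{\widehat P}\big[\sum_{t\ge0}\KL{\hat\rho}{q_t(\cdot|h_t)}\big]\le-\log\mu_0$. Discounting gives
\begin{align*}
\bE_{\widehat P}\Big[(1-\gd)\sum_t\gd^t d_t\Big]\le(1-\gd)(-\log\mu_0),
\end{align*}
where $d_t:=\KL{\hat\rho}{q_t(\cdot|h_t)}$. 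This bound tends to $0$ as $\gd\to1$.

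The third step links one-period entropy to best responses. Player 2 in period $t$ best responds to $q_t(\cdot|h_t)$, because her realized payoff depends only on $(b,y)$ and her belief about $y$ is $q_t$. I would prove an upper-hemicontinuity lemma for a function $\varphi$. For each $d\ge0$, let $\varphi(d)$ be the infimum of $u(\hat\ga,\gb)$ over $\gb\in\BR^2(q)$ and $q$ with $\KL{\hat\rho}{q}\le d$. By Pinsker's inequality and the upper hemicontinuity of $\BR^2$ on the compact set $\Delta(Y)$, $\varphi(d)\to\varphi(0)=R$ as $d\downarrow0$. Take the largest convex minorant $\underline\varphi$, which is nonincreasing, convex, and continuous at $0$. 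This step needs $\varphi$ bounded below by $\min u$, and it needs the minimum in the definition of $R$ to be taken over best responses to $\hat\rho$ itself. One could also avoid convexification by using a Markov-inequality split: periods with $d_t>\varepsilon$ carry small discounted weight by the previous display.

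Finally, the deviation payoff equals $\bE_{\widehat P}[(1-\gd)\sum_t\gd^t u(\hat\ga,\gs^2_t(h_t))]$. Each term is at least $\underline\varphi(d_t)$. Jensen's inequality applied to the discounted average then yields a lower bound of $\underline\varphi((1-\gd)(-\log\mu_0))$, which tends to $R$ as $\gd\to1$. The bound does not depend on $\gs$, so it gives $\liminf_{\gd\to1}\underline W(\gd;\widehat P)\ge R$.

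I expect the main obstacle to be the third step: constructing the continuity modulus $\varphi$ and verifying that $\varphi(d)\to R$. This is where full support, finiteness, and the definition of $R$ as a minimum over $\BR^2(\hat\rho)$ must be used carefully, since a predictive belief $q$ near $\hat\rho$ can admit best responses that are not in $\BR^2(\hat\rho)$ but only close to it. The entropy accounting itself is routine.
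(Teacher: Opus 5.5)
Your proof is correct. The paper omits its own proof of this theorem and defers to the literature (\citet{fudenberg1992}; \citet{gossner2011simple}); your argument is the entropy argument of the latter.

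The closest in-paper argument is the proof of \cref{thm:entropy-rate-robustness} specialized to $F=\widehat P$. It shares your first two steps, the grain-of-truth bound and the chain rule. In place of the convex minorant and Jensen's inequality, it uses exactly the Markov-type split you mention as an alternative. For a target $w<R$, a period is bad if $\phi_{\hat\ga}(q_t(\cdot|h_t))<w$, and every bad period has $d_t\ge m_{\hat\ga}(w)>0$. So bad periods carry discounted $\widehat P$-weight at most $(1-\gd)(-\log\mu_0)/m_{\hat\ga}(w)$. The deviation payoff is then at least $w$ minus $\max(0,w-\min_{b\in B}u(\hat\ga,b))$ times that weight, and one lets $\gd\to1$ and then $w\nearrow R$.

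Your route gives one modulus $\underline\varphi$ covering all targets at once, at the cost of the convexification step. The threshold route needs no convexity and produces the explicit target-by-target modulus in \eqref{eq:modulus}.

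One further difference matters only beyond correct specification. Your discounting step uses $\gd^t\le1$, so it relies on the bounded total entropy $\sum_t\bE_{\widehat P}[d_t]\le-\log\mu_0$. The paper instead Abel-sums, $(1-\gd)\sum_t\gd^t\bE_{\widehat P}[d_t]=(1-\gd)^2\sum_{T\ge1}\gd^{T-1}\sum_{t<T}\bE_{\widehat P}[d_t]$. This still yields a per-period bound when $\KL{\widehat P_T}{F_T}$ grows linearly in $T$.
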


The proof is standard and available in the literature; I therefore omit it. Proofs of other formal statements are in \cref{sec:proofs}.

\cref{thm:fl92} shows that in the canonical framework in which the short-lived players perceive the correct commitment-type signal process $\widehat P$, a patient normal-type player 1 can secure, in every equilibrium, the payoff he would obtain if he could commit to persistently play $\hat\ga$. The intuition is by now standard. In the cleanest case where the commitment action is statistically identifiable---that is, no other action induces the same signal distribution---\citet[Section 2.4]{mailath2015reputations} offer the following intuition. Fix any equilibrium, and suppose the normal type deviates to persistent play of the commitment action. If the short-lived players assign low probability to that action being played, then the signals generated by the deviation tend to increase their posterior belief in the commitment type, and hence the posterior probability that the commitment action is being played. Thus, there cannot be too many histories at which they assign low probability to the commitment action being played. A patient normal type treats such histories as negligible and therefore, by deviating, obtains approximately the commitment payoff. This yields the equilibrium lower bound $R$.

An important implication is the reputation effect: if $R > \overline W^{CI}$, then incomplete information about player 1's type allows him to fare strictly better in every equilibrium than in the complete-information benchmark. That is,
\begin{align}
\label{eq:flrepeffect}
\liminf_{\gd\ra1}
\underline W(\gd;\widehat P)
>
\overline W^{CI}.
\end{align}
This is the case in \cref{sec:motivatingexample}.

I now ask whether the canonical bound $R$ is robust to short-lived players' misspecification of the commitment-type signal process.


\section{Entropy-rate robustness}
\label{sec:entropyrobustness}


This section gives the robustness result. A misspecification is harmless for the reputation bound if it preserves the likelihood of the histories generated by the normal type's commitment-action deviation. I show that the right measure of this likelihood discipline is entropy-rate loss.



I first define entropy-rate loss. For any two distributions $F',F \in \Delta(H)$ and period $T=1,2,\dots$, denote by $\KL{F'_T}{F_T}$ the relative entropy, \ie the Kullback-Leibler divergence, of $F'_T$ from $F_T$, which measures the expected log-likelihood loss from evaluating histories generated from $F'_T$ using $F_T$:
\begin{align}\label{eq:KL}
D(F'_T \parallel F_T)
:=
\bE_{F'_T}\!\left[ \log \frac{F'_T(h_T)}{F_T(h_T)} \right]\!.
\end{align}
For any $F\in\Delta(H)$, define the entropy rate of $F$ from the true commitment-type signal process $\widehat P$ by
\begin{align}\label{eq:entropyrate}
\overline D(\widehat P\parallel F)
:=
\limsup_{T\to\infty}
\frac{1}{T}
\KL{\widehat P_T}{F_T}.
\end{align}
This is the asymptotic per-period log-likelihood loss from evaluating histories generated by the commitment-action deviation under $F$. Small entropy rate means that these histories are not too unlikely under $F$ on average.

To state the result, I also need to define the entropy margin. For every predictive signal distribution $q\in\Delta(Y)$, let
$
\phi_{\hat\ga}(q)
:=
\min_{\gb\in\BR^2(q)}
u(\hat\ga,\gb).
$
This is the long-lived player's worst payoff from playing the commitment action $\hat\ga$ when the short-lived player best responds to the predictive distribution $q$. Recall the canonical payoff bound $R$ defined in \eqref{eq:R}. For every target payoff $w<R$, define the entropy margin by
\begin{align}\label{eq:margin}
m_{\hat\ga}(w)
:=
\inf_{q:\phi_{\hat\ga}(q)<w}
\KL{\hat\rho}{q}.
\end{align}
This measures how far predictive signal beliefs can move, in likelihood terms, from the true signal distribution $\hat \rho$ before the target payoff $w$ is jeopardized given that the long-lived player plays the commitment action. Note that $m_{\hat\ga}(w)>0$.\footnote{Because $Y$ and $B$ are finite, $\BR^2(\cdot)$ is upper hemicontinuous and $\phi_{\hat\ga}$ is lower semicontinuous; since $\phi_{\hat\ga}(\hat\rho)=R>w$, there is a neighborhood of $\hat\rho$ on which $\phi_{\hat\ga}(q)>w$, and so $m_{\hat\ga}(w)>0$.}

I illustrate the entropy margin in the first misspecification example in \cref{sec:motivatingexample}. There, the commitment action is $a_h$ and the canonical reputation bound $R$ is equal to $2$. Moreover, $D(\widehat P \parallel F) = \diff(p, p+\ve)$, where
\begin{align*}
\diff(p, x) &:= p\log\frac{p}{x}
+
(1-p)\log\frac{1-p}{1-x}, \quad x \in (0,1).
\end{align*}
Each consumer buys the high-priced product if and only if her predictive probability of $y_h$ is at least $(p+p')/2$. Thus, for any target payoff $w\in(0,2)$, the predictive signal distributions jeopardizing the target payoff $w$ for the firm's commitment-action deviation are those assigning probability at most $(p+p')/2$ to $y_h$. Therefore, 
\begin{align*}
m_{a_h}(w)
=
\inf_{x\le (p+p')/2} \diff (p, x)
&=
\diff\!\left(p, \frac{p+p'}{2} \right)\!.
\end{align*}

My first main result shows that any misspecification with sufficiently small entropy-rate loss preserves the reputation bound. In the result, I adopt the convention $0\cdot\infty=0$, and explain that this is innocuous momentarily.

\begin{thm*}[Entropy-rate robustness]
\label{thm:entropy-rate-robustness}
Let $F \in \Delta(H)$. For every $w < R$,
\begin{align}
\liminf_{\gd\to1}\underline W(\gd;F)
\ge
w
-
\frac{
\overline D(\widehat P \parallel F)
}{
m_{\hat\ga}(w)
}
\max\!\left(0, w- \min_{b\in B}u(\hat\ga,b) \right)\!. \label{eq:modulus}
\end{align}
Therefore, for any sequence $(F^n)_{n=0}^{\infty}$ in $\Delta(H)$, if $\overline D(\widehat P \parallel F^n) \ra 0$ as $n \ra \infty$, then
\begin{align}\label{eq:entropyrobust}
\liminf_{n\to\infty}
\liminf_{\gd\to1}
\underline W(\gd;F^n)
\ge
R.
\end{align}
\end{thm*}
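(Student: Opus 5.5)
The plan is to adapt the entropy argument of \citet{gossner2011simple} and run it against the misspecified process $F$ instead of $\widehat P$. Fix an equilibrium $\gs\in\Gs(\gd,F)$ and consider the normal type's deviation to persistent play of $\hat\ga$. Let $Q$ be the true distribution over signal histories under this deviation; since signals do not depend on player 2's actions, $Q=\widehat P$. Let $P^\gs$ be player 2's subjective (equilibrium) predictive process, that is, the mixture $\mu_0 F+(1-\mu_0)P^{0}$, where $P^0$ is the signal process generated by the normal type's equilibrium strategy. In period $t$ after history $h_t$, player 2 best responds to the one-step predictive distribution $q_t(h_t)=P^\gs(\cdot\mid h_t)$. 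The deviation's true one-step distribution is $\hat\rho$. Whenever $\KL{\hat\rho}{q_t}<m_{\hat\ga}(w)$, the definition of the margin gives $\phi_{\hat\ga}(q_t)\ge w$, so the deviation earns at least $w$ in that period. Otherwise it earns at least $\min_b u(\hat\ga,b)$.

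The key steps, in order, are as follows.

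\textbf{Step 1: total one-step divergence.} By the chain rule for relative entropy,
\begin{align*}
\sum_{t<T}\bE_{\widehat P}\big[\KL{\hat\rho}{q_t(h_t)}\big]=\KL{\widehat P_T}{P^\gs_T}\le -\log\mu_0+\KL{\widehat P_T}{F_T},
\end{align*}
where the inequality uses $P^\gs_T\ge\mu_0F_T$ pointwise. This is the only place $F$ enters, and it replaces the usual bound $-\log\mu_0$. The convention $0\cdot\infty=0$ covers the case where the right side is infinite: then the bound \eqref{eq:modulus} is trivially vacuous or handled by the max term.

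\textbf{Step 2: discounted counting.} Let $d_t:=\bE\big[\mathbf 1\{\KL{\hat\rho}{q_t}\ge m\}\big]$ with $m=m_{\hat\ga}(w)$. By Markov's inequality, $d_t\le \bE[\KL{\hat\rho}{q_t}]/m$. The deviation payoff is then at least
\begin{align*}
w-\Big(\max(0,\,w-\min_b u(\hat\ga,b))\Big)(1-\gd)\sum_t\gd^t d_t.
\end{align*}

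\textbf{Step 3: from partial sums to a discounted sum.} Summation by parts gives
\begin{align*}
(1-\gd)\sum_t\gd^t d_t=(1-\gd)^2\sum_T \gd^{T}\,S_{T+1}, \qquad S_T:=\sum_{t<T}d_t.
\end{align*}
Step 1 implies $S_T\le m^{-1}\big(-\log\mu_0+\KL{\widehat P_T}{F_T}\big)$. For any $\eta>0$ there is $T_0$ such that $\KL{\widehat P_T}{F_T}\le T(\overline D+\eta)$ for $T\ge T_0$. Since $(1-\gd)^2\sum_T\gd^T T\to1$, while the finitely many early terms and the $-\log\mu_0$ term are multiplied by factors vanishing as $\gd\to1$, the weighted sum has $\limsup$ at most $(\overline D+\eta)/m$. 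Letting $\eta\to0$, and noting that the equilibrium payoff is at least the deviation payoff, yields \eqref{eq:modulus} uniformly over $\gs$.

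\textbf{Step 4: the corollary \eqref{eq:entropyrobust}.} This follows by fixing $w<R$, letting $n\to\infty$ so the subtracted term vanishes (since $m_{\hat\ga}(w)>0$ is fixed), and then letting $w\uparrow R$.

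The main obstacle is Step 3. The $\limsup$ defining $\overline D$ only controls large $T$, and an Abel-type summation is needed so that early periods, where $\KL{\widehat P_T}{F_T}$ may be huge or even infinite for small $T$ if $F$ lacks support, are washed out by patience. If some $\KL{\widehat P_T}{F_T}=\infty$ for a finite $T$, then, since relative entropy is increasing in $T$, it is infinite for all larger $T$, so $\overline D=\infty$. The convention is then needed so that \eqref{eq:modulus} reads correctly: the right side is $-\infty$ unless $w\le\min_b u$, in which case it equals $w$, which the deviation trivially secures.
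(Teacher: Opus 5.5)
Your proposal is correct and follows essentially the same route as the paper. The shared steps are the mixture bound $Q^\gs_T\ge\mu_0F_T$ combined with the chain rule, a pointwise (Markov) count of bad periods against $m_{\hat\ga}(w)$, an Abel summation that reproduces the upper half of \cref{prop:entropy-rate-implies-discounted}, and the convention-based treatment of $\overline D(\widehat P\parallel F)=\infty$. The one case you leave implicit is $m_{\hat\ga}(w)=\infty$, which the paper treats separately: every equilibrium predictive distribution $q_t(\cdot|h_t)$ has full support, so no period is bad and the deviation secures $w$ outright.
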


The theorem gives a direct robustness test. For a given misspecified process $F$ and target payoff $w<R$, compute the entropy-rate loss $\overline D(\widehat P\parallel F)$ along the commitment-action deviation. If this loss is small, then every equilibrium payoff of the patient normal type remains close to the target payoff. In particular, along any sequence of misspecifications with vanishing entropy-rate loss, the canonical reputation bound $R$ is recovered. The robustness test is easy to apply; I illustrate it in \cref{eg:missiid} below. Note that this test applies to general subjective signal processes $F$. The entropy-rate formulation therefore compares the true and subjective commitment-type signal processes, $\widehat P$ and $F$, rather than comparing one-period signal distributions that are more familiar in the reputation literature (\egg \citealp{gossner2011simple}).

Intuitively, along the commitment-action deviation, histories are generated according to $\widehat P$ but evaluated by short-lived players under $F$. Entropy-rate loss measures how much likelihood is lost in that evaluation. When this loss is small relative to the entropy margin $m_{\hat\ga}(\cdot)$, the short-lived players rarely reach predictive signal beliefs at which their best responses make the commitment-action deviation unprofitable.

In \cref{thm:entropy-rate-robustness}, the convention $0\cdot\infty=0$ is relevant only when $\overline D(\widehat P \parallel F) = \infty$ and $w \le \min_{b \in B} u(\hat \ga, b)$. In this case, \cref{thm:entropy-rate-robustness}, together with the convention, concludes that in every equilibrium, the patient normal type secures a payoff of at least $w$ by deviating to persistently play $\hat \ga$. Without this convention, the bound in \eqref{eq:modulus} is undefined. However, in this case, in every equilibrium, the patient normal type secures a payoff of at least $\min_{b\in B}u(\hat\ga,b)$, and therefore at least $w$, by deviating to persistently play $\hat \ga$. Hence the conclusion of \cref{thm:entropy-rate-robustness} continues to hold.

Entropy rate is not an external criterion chosen for convenience. \cref{prop:entropy-rate-implies-discounted} below shows that it is the likelihood cost generated by the very deviation used to sustain the reputation bound. Along this deviation, because histories are generated according to $\widehat P$ but evaluated by short-lived players under $F$, over a horizon $T$, the expected log-likelihood loss is $\KL{\widehat P_T}{F_T}$. The relevant normalized discounted likelihood loss for the normal type is therefore
\begin{align*}
D_\gd(\widehat P\parallel F)
:=
(1-\gd)^2
\sum_{T=1}^{\infty}
\gd^{T-1}
\KL{\widehat P_T}{F_T}.
\end{align*}
The normalizing factor $(1-\gd)^2$ converts accumulated $T$-period losses into a per-period loss: if $\KL{\widehat P_T}{F_T}$ grows approximately like $\kappa T$, then $D_\gd(\widehat P\parallel F)$ is approximately $\kappa$, for any $\gk>0$.

\begin{prop*}\label{prop:entropy-rate-implies-discounted}
For every $F\in\Delta(H)$ satisfying $\overline D(\widehat P\parallel F)<\infty$,
\begin{align}
\expo^{-1}\overline D(\widehat P\parallel F)
\le
\limsup_{\gd\to1}
D_\gd(\widehat P\parallel F)
\le
\overline D(\widehat P\parallel F).
\label{eq:limll}
\end{align}
\end{prop*}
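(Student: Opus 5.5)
The plan is to work with the sequence $a_T:=\KL{\widehat P_T}{F_T}\in[0,\infty]$ and to use two properties of it. The first is its growth rate, $L:=\overline D(\widehat P\parallel F)=\limsup_T a_T/T<\infty$. The second is that $a_T$ is nondecreasing in $T$. Both inequalities in \eqref{eq:limll} then reduce to elementary facts about discounted sums, using the identity $\sum_{T\ge1}T\gd^{T-1}=(1-\gd)^{-2}$.

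First I establish monotonicity. Since $\widehat P_T$ and $F_T$ are the $Y^T$-marginals of $\widehat P_{T+1}$ and $F_{T+1}$, the chain rule for relative entropy gives
\[
a_{T+1}=a_T+\bE_{\widehat P_T}\bigl[\KL{\widehat P(\cdot|h_T)}{F(\cdot|h_T)}\bigr]\ge a_T.
\]
Equivalently, this is the data-processing inequality for the projection $Y^{T+1}\to Y^T$. In particular, $L<\infty$ forces $a_T<\infty$ for all large $T$, and monotonicity then makes $a_T<\infty$ for every $T$. So no $\infty$ arithmetic arises.

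For the upper bound, I fix $\ve>0$. By definition of the limsup there is $C<\infty$ with $a_T\le (L+\ve)T+C$ for all $T\ge1$; the constant $C$ absorbs the finitely many early terms. Summing this bound against the discount weights gives
\[
D_\gd(\widehat P\parallel F)\le (1-\gd)^2\sum_{T\ge1}\gd^{T-1}\bigl((L+\ve)T+C\bigr)=(L+\ve)+C(1-\gd).
\]
Hence $\limsup_{\gd\to1}D_\gd\le L+\ve$, and letting $\ve\to0$ yields the right-hand inequality.

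For the lower bound, which is the step requiring a specific choice, I drop all terms before a cutoff $T$ and use monotonicity on the remaining ones:
\[
D_\gd\ge(1-\gd)^2 a_T\sum_{S\ge T}\gd^{S-1}=(1-\gd)\gd^{T-1}a_T.
\]
If $L=0$ there is nothing to prove. Otherwise I pick a subsequence $T_k\to\infty$ with $a_{T_k}/T_k\to L$ and set $\gd_k:=1-1/T_k\to1$. The display then becomes
\[
D_{\gd_k}\ge \Bigl(1-\tfrac1{T_k}\Bigr)^{T_k-1}\frac{a_{T_k}}{T_k}\to \expo^{-1}L,
\]
so $\limsup_{\gd\to1}D_\gd\ge \expo^{-1}L$. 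The factor $\expo^{-1}$ is exactly the cost of coupling the effective horizon $1/(1-\gd)$ to the cutoff $T$. The only places where care is needed are the monotonicity step and the correct choice of $\gd$ tied to $T_k$; everything else is routine.
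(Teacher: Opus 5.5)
Your proof is correct and follows the paper's argument essentially step for step. For the lower bound, both use monotonicity of $T\mapsto D(\widehat P_T\parallel F_T)$ under marginalization and the choice $\delta_k=1-1/T_k$ along a subsequence attaining the limsup. For the upper bound, both use a linear-in-$T$ bound together with $(1-\delta)^2\sum_{T\ge1}T\delta^{T-1}=1$. Your explicit check that every $D(\widehat P_T\parallel F_T)$ is finite, which is needed for the early terms to vanish as $\delta\to1$, is a small point the paper leaves implicit.
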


Therefore, the patient-limit discounted likelihood loss generated by the commitment-action deviation is equivalent to the entropy rate, up to a scaling constant, and vanishes when the entropy-rate loss vanishes.

Finally, I illustrate \cref{thm:entropy-rate-robustness} with a simple example.

\begin{eg*}[Misspecified one-period marginal]\normalfont\label{eg:missiid}
Suppose that the short-lived players correctly view commitment-type signals as i.i.d.\ but use a wrong one-period marginal; the first misspecification example in \cref{sec:motivatingexample} is a special case. More precisely, their subjective commitment-type signal process $F$ is i.i.d.\ with full-support one-period marginal $f\in\Delta(Y)$, so that for every $T=1,2,\dots$, $F_T(h_T)=\prod_{t=0}^{T-1} f(y_t)$. Then, $\overline D(\widehat P \parallel F)=D(\hat\rho \parallel f)$. This is because $\hat P_T(h_T)=\prod_{t=0}^{T-1}\hat\rho(y_t)$ and $F_T(h_T)=\prod_{t=0}^{T-1}f(y_t)$. By additivity of relative entropy for product measures, $D(\hat P_T\parallel F_T)=T D(\hat\rho\parallel f)$. Thus $\overline D(\hat P\parallel F) =
\limsup_{T\to\infty}D(\hat P_T\parallel F_T)/T = 
D(\hat\rho\parallel f)$. \cref{thm:entropy-rate-robustness} implies that small one-period misspecification $f$ of $\hat \rho$ preserves the reputation bound.
\end{eg*}


The same calculation applies to other natural misspecifications. \cref{sec:examplerobustness} gives additional examples. Transient initial misspecification preserves the reputation bound. Similarly, short-lived players' uncertainty over commitment-type signal processes is also harmless when the correct process receives fixed positive weight. Finally, mild misspecification of Markovian rather than i.i.d.\ signals preserves the reputation bound. In each case, the misspecification may change finite-history samples or introduce subjective serial dependence, but it does not impose a persistent likelihood penalty on the long histories generated by the commitment-action deviation.

Entropy-rate control is sufficient for robustness because it disciplines the long histories through which a patient player secures the reputation bound. The next section shows that finite-history tests do not provide this discipline.

\section{The failure of finite-history robustness}
\label{sec:collapse}

A tempting robustness test is finite-history agreement: for any fixed horizon, does the misspecified process assign almost correct probabilities to all signal histories over the horizon? This test is natural especially because players observe only a finite history in each period. This section shows that this test fails because it does not control the long histories that become payoff-relevant as the normal type becomes patient. I begin with a falsification criterion.

The simplest route to a collapse of reputation effects is falsification. If histories reached quickly under every normal-type strategy have zero likelihood under the misspecified commitment-type signal process---more precisely, there
is a stopping time $\theta:H\to\{0,1,\ldots\}\cup\{\infty\}$ such that $F(\theta<\infty)=0$ and $\sup_{\gs^1}\bE_{\gs^1}[\theta| \xi^0]<\infty$, the continuation game after such histories is one of complete information. In every equilibrium, the payoff of a patient normal type who views stage payoffs before falsification as negligible is then at most his highest complete-information equilibrium payoff. \cref{prop:falsification} formalizes this intuition.


\begin{prop*}[Falsification]\label{prop:falsification}
Let $F\in\Delta(H)$. Suppose there
exists a stopping time  $\theta:H\to\{0,1,\ldots\}\cup\{\infty\}$ such that $F(\theta<\infty)=0$ and $\sup_{\gs^1}\bE_{\gs^1}[\theta| \xi^0]<\infty$.
Then
\begin{align}\label{eq:payoff-collapse}
\limsup_{\gd\to1}\overline W(\gd;F)\le \overline W^{CI}.
\end{align}
\end{prop*}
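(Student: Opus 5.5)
I would fix an equilibrium $\gs\in\Gs(\gd,F)$ and split the normal type's discounted payoff at the falsification time $\theta$. The first step is to show that after any history $h_t$ with $\theta(h)\le t$, the short-lived players' posterior on $\hat\xi$ is zero. Since they are Bayesian with subjective commitment-type process $F$, the posterior is $\mu_0F_t(h_t)/(\mu_0F_t(h_t)+(1-\mu_0)\bP^{\gs}(h_t|\xi^0))$. Because $\theta$ is a stopping time and $F(\theta<\infty)=0$, every finite history $h_t$ on which $\{\theta\le t\}$ has occurred satisfies $F_t(h_t)=0$. Full support of $\rho$ ensures that $\bP^\gs(h_t|\xi^0)>0$, so such histories are on-path and Bayes' rule applies. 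The posterior is therefore zero, and from time $\theta$ on, player 2's incentives are those of the complete-information game.

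The second step is to show that the continuation play from $\theta$ is, conditional on the normal type, a Nash equilibrium of the complete-information game starting at the realized history. Player 2 best responds to $\rho$ integrated against the normal type's continuation strategy, since the commitment type has posterior zero. The normal type's continuation strategy must be optimal almost surely on path. Full support again makes every history on path, so this holds at every history. I would use the standard fact that continuation of a Nash equilibrium after a positive-probability public history is a Nash equilibrium of the continuation game. This game is a repeated game with short-lived players and public monitoring, so its continuation payoff is bounded by $\sup_{\gs'\in\Gs^{CI}(\gd)}U^{CI}(\gs';\gd)=:\overline W^{CI}(\gd)$. Continuation payoffs in the history-dependent game are equilibrium payoffs of the same complete-information game from scratch, because the game is stationary.

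The third step bounds the pre-falsification part. Let $M:=\max|u|$ and $K:=\sup_{\gs^1}\bE_{\gs^1}[\theta|\xi^0]<\infty$. Then
\begin{align*}
U(\gs;\gd,F)\le \bE_\gs\!\left[(1-\gd^{\theta})M+\gd^{\theta}\,\overline W^{CI}(\gd)\,\middle|\,\xi^0\right]
\le (1-\gd)KM\cdot 2+\overline W^{CI}(\gd)+\ldots .
\end{align*}
More carefully, I would use $1-\gd^\theta\le(1-\gd)\theta$ and $|\gd^\theta \overline W^{CI}(\gd)-\overline W^{CI}(\gd)|\le M(1-\gd)\theta$. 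This gives $U\le \overline W^{CI}(\gd)+2MK(1-\gd)$, uniformly over equilibria. This is where the uniform bound over $\gs^1$ in the hypothesis is needed, since the normal type's equilibrium strategy is arbitrary. On $\{\theta=\infty\}$ the bound is trivial, and that event has probability zero because $\bE[\theta]<\infty$. Taking $\sup_\gs$ and then $\limsup_{\gd\to1}$ yields $\limsup_{\gd\to1}\overline W(\gd;F)\le\limsup_{\gd\to1}\overline W^{CI}(\gd)=\overline W^{CI}$.

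I expect the main obstacle to be the second step: rigorously identifying the post-$\theta$ continuation as a complete-information equilibrium. The complications are that $\theta$ is random, that player 1's strategy depends on history, and that one must check that the normal type cannot profit from deviations in the continuation. I would handle this by conditioning on each finite history $h_t$ with $\theta=t$. A profitable continuation deviation at a positive-probability history would be a profitable deviation in the original game, which contradicts Nash equilibrium. Full support is the key that avoids off-path issues.
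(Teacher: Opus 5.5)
Your proposal is correct and takes essentially the same route as the paper's proof. Full support together with $F(\theta<\infty)=0$ gives zero posterior on the commitment type from $\theta$ onward, so the continuation payoff is at most $\overline W^{CI}(\delta)$. The pre-falsification term is then controlled by $1-\delta^\theta\le(1-\delta)\theta$ and the uniform bound on the normal type's expected falsification time. Your Step 2 just makes explicit, via optimality at every positive-probability history and stationarity of the continuation game, the identification of post-$\theta$ play as a complete-information equilibrium, which the paper asserts in one line.
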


\cref{eg:streak} below illustrates \cref{prop:falsification}. 

\begin{eg*}[Misspecified recovery]\label{eg:streak}\normalfont
Recall the consumers' misspecified signal process $F$ in the second misspecification example in \cref{sec:motivatingexample}: they view a string of $n$ consecutive signals $y_l$ as inconsistent with commitment. Write $F_t(y| h_t)$ for the probability that signal $y$ realizes at history $h_t$ under $F$, and write $\hat\rho(y)$ for the true one-period commitment-type signal probability. Define
\begin{align}\label{eq:constructionexact}
F_t(y|h_t)
=
\begin{cases}
0,
&
\text{if }y=y_l\text{ and last }n-1\text{ signals in }h_t\text{ are }y_l,\\
1,
&
\text{if }y = y_h\text{ and last }n-1\text{ signals in }h_t\text{ are }y_l,\\
\hat\rho(y),
&
\text{otherwise.}
\end{cases}
\end{align}
Because monitoring has full support, the block of $n$ consecutive realizations of $y_l$ occurs in finite expected time under every normal-type strategy. Thus, if $\theta$ is the time at which such a block first realizes, then $F(\theta < \infty) = 0$ and $\sup_{\gs^1}\bE_{\gs^1}[\theta | \xi^0]<\infty$. \cref{prop:falsification} then applies.
\end{eg*}


The falsification criterion leads to my second main result: finite-history agreement is not payoff continuity. A sequence of misspecifications can agree with the truth on every fixed finite horizon while the patient normal type's equilibrium payoff collapses to the complete-information benchmark. 


\begin{thm*}[Finite-horizon agreement does not preserve reputation effects]\label{thm:fd-collapse}
There exists a sequence of subjective commitment-type signal processes $(F^n)_{n=1}^\infty$ such that, for every $T$, $F^n_T=\widehat P_T$ for all $n>T$, but for every $n$,
\begin{align}\label{eq:collapse}
\limsup_{\gd\to1}
\overline W(\gd; F^n) \le
\overline W^{CI}.
\end{align}
\end{thm*}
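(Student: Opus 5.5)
The plan is to build $F^n$ by truncating the falsification construction of \cref{eg:streak} so that it only becomes active after period $n$, and then to apply \cref{prop:falsification} to each $F^n$. Fix a signal $\bar y\in Y$ and an integer $k\ge1$; $k=1$ suffices. Define $F^n$ through its one-step conditionals. For $t<n$, set $F^n_t(\cdot\mid h_t)=\hat\rho$. For $t\ge n$, set $F^n_t(\cdot\mid h_t)=\hat\rho$ unless the last $k-1$ signals of $h_t$ (taken among periods $\ge n$) all equal $\bar y$; at such histories, assign probability zero to $\bar y$ and renormalize $\hat\rho$ over $Y\setminus\{\bar y\}$. This is possible because $|Y|\ge2$ and $\hat\rho$ has full support. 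With $k=1$, the rule is simply that $\bar y$ has zero probability in every period $t\ge n$.

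First I verify finite-horizon agreement. For $T<n$, the conditionals of $F^n$ in periods $0,\dots,T-1$ coincide with those of $\widehat P$. Hence $F^n_T(h_T)=\prod_{t<T}\hat\rho(y_t)=\widehat P_T(h_T)$, which gives $F^n_T=\widehat P_T$ for all $n>T$. By the Ionescu-Tulcea/Kolmogorov construction, $F^n$ is a well-defined element of $\Delta(H)$.

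Next I construct the stopping time for each fixed $n$. Let $\theta^n$ be the first time $t\ge n$ at which a block of $k$ consecutive $\bar y$'s, all in periods $\ge n$, is completed (for $k=1$, the first $t\ge n$ with $y_t=\bar y$). By construction $F^n(\theta^n<\infty)=0$, since every such block has conditional probability zero at its final step. For the expected-time bound, let $\eta:=\min_{a}\rho(\bar y\mid a)>0$, which is positive by full support. Under any normal-type strategy, each period's signal equals $\bar y$ with conditional probability at least $\eta$, whatever the past. Splitting time after $n$ into disjoint blocks of length $k$, each block is all-$\bar y$ with conditional probability at least $\eta^k$. So $\theta^n-n$ is stochastically dominated by $k$ times a geometric variable, and $\sup_{\gs^1}\bE_{\gs^1}[\theta^n\mid\xi^0]\le n+k\eta^{-k}<\infty$. \cref{prop:falsification} then yields \eqref{eq:collapse} for every $n$.

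The main obstacle is checking that \cref{prop:falsification} applies verbatim. Its hypotheses are stated only in terms of $F$, and the expectation is taken under the normal type's play, not under the equilibrium-dependent actions of player 2. The dominance argument above uses only the full-support bound on $\rho$, so it holds uniformly over $\gs^1$ and any behavior of player 2. That uniformity is exactly what the proposition requires, so no issue arises. The only remaining care is that $F^n$ need not have full support, which the model explicitly allows.
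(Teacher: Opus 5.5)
Your proof is correct and follows the paper's route. Both construct $F^n$ that coincides with $\widehat P$ on the first $n-1$ periods but assigns zero probability to an event that every normal-type strategy triggers in uniformly bounded expected time, and then invoke \cref{prop:falsification}. The only difference is the forbidden event. The paper rules out a streak of $n$ consecutive $\tilde y$'s, a stationary, history-based standard with expected falsification time at most $n/\underline\rho^n$. You instead switch on a fixed-length restriction at calendar date $n$, with expected time at most $n+k\eta^{-k}$. This works equally well, since \cref{prop:falsification} only needs that expectation to be finite for each fixed $n$.
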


\cref{thm:fd-collapse} contrasts with entropy-rate control of the likelihood of the arbitrarily long histories on which the reputation payoff bound is built. To illustrate, consider again \cref{eg:streak}. Let $F^n$ be the subjective commitment-type signal process under which consumers view a string of $n$ consecutive signals $y_l$ as inconsistent with commitment. Then, for every $T=1,2,\dots$, $F^n_T=\widehat P_T$ for all $n>T$ so that as $n \ra \infty$, misspecification vanishes for every fixed finite horizon. Yet  \cref{thm:fd-collapse} implies that for every $n$, \eqref{eq:collapse} holds; that is, reputation effects collapse.

The same logic applies to other natural evidentiary standards that attach significance to particular signal patterns. \Cref{sec:falsiexample} gives additional examples, interpreting the short-lived players as consumers and the long-lived player as a firm, as in \cref{sec:motivatingexample}. The examples consider consumers who perceive a finite complaint budget, treating sufficiently many low-quality signals as inconsistent with commitment; consumers who treat any low-quality signal after a long clean record as inconsistent with commitment; consumers who view low-quality signals at salient audit dates as impossible under commitment; and consumers who evaluate the firm through block reviews, treating too many low-quality signals in a review window as impossible under commitment. In each case, the misspecification vanishes and passes every fixed finite-history test as the relevant tolerance or review horizon grows, but reputation effects collapse.

I do not seek a general if-and-only-if characterization of the primitives for which reputation effects collapse. The answer depends, in general, on how likelihood distortions due to misspecification interact with equilibrium behavior. A full characterization would require a general characterization of equilibrium behavior in reputation models, an open problem in the literature that is beyond the scope of this paper.

In proving \cref{thm:fd-collapse}, exact falsification is transparent but stronger than necessary. In the Online Appendix, I show that reputation effects can collapse even when every finite history has positive probability under the misspecified commitment-type signal process: what matters is whether the misspecified commitment-type explanation becomes negligible at histories reached early relative to discounting. There, I also illustrate this result with a full-support variant of \cref{eg:streak}; consumers view a string of $n$ consecutive signals $y_l$ as unlikely, but not impossible, conditional on the commitment type. The full-support variant of the other examples described above can be similarly constructed.

\section{Reputation does not need attainable commitment}
\label{sec:payoff-versus-type}

The entropy-rate criterion separates two ideas that are often conflated: attainability of the subjective commitment-type signal process and plausibility of commitment behavior. \cref{thm:entropy-rate-robustness} implies that reputation effects do not require the short-lived players' subjective commitment-type signal process to be attainable by any strategy of the normal type. The reputation bound survives if the histories generated by the commitment-action deviation remain sufficiently plausible under that process.

To make this implication sharp, in this section I focus on misspecifications that make the subjective per-period commitment-type signal distribution unattainable under the true monitoring structure. Specifically, I say that a subjective commitment-type process $F\in\Delta(H)$ is i.i.d.\ commitment-separating if there exists a full-support distribution $f\in\Delta(Y)$ such that for every $T=1,2,\dots$ and $h_T=(y_0,\ldots,y_{T-1})$, $
F_T(h_T)
=
\prod_{t=0}^{T-1} f(y_t)$, and $\inf_{\ga\in\Delta(A)}
\KL{\rho_\ga}{f} > 0$. The example in \cref{sec:motivatingexample}, in which the consumers perceive the commitment type as producing a high-quality signal with probability $p+\ve$ rather than $p$, and a low-quality signal otherwise, is one instance of an i.i.d.\ commitment-separating signal process.

Under i.i.d.\ commitment separation, equilibrium play is complete-information-like in discounted-average terms in the patient limit, as \cref{prop:vanish} below makes precise. In any equilibrium $\gs$, let $\mu^\gs_t(h_t)$ denote the short-lived players' posterior belief that the long-lived player is a commitment type after history $h_t$. Let $\ell^\gs_t(h_t) := \max_{b\in B} v(\gs^1_t(h_t,\xi^0),b) - v(\gs^1_t(h_t,\xi^0),\gs^2_t(h_t))$ denote the short-lived player's payoff loss against the normal type at history $h_t$ from playing the action prescribed by $\gs$ rather than a best response to the normal type's strategy in $\gs$. 


\begin{prop*}
\label{prop:vanish}
Let $F \in \Delta(H)$ be i.i.d.\ commitment-separating. Then
\begin{align}
\limsup_{\gd\ra1}
\sup_{\gs\in\Gs(\gd,F)}
\!\bE_\gs\left[
(1-\gd)\sum_{t=0}^{\infty}\gd^t\mu^\gs_t(h_t)
\middle|\xi^0
\right]
&=0,
\label{eq:vanishingrep}\\
\limsup_{\gd\ra1}
\sup_{\gs\in\Gs(\gd,F)}
\!\bE_\gs\left[
(1-\gd)\sum_{t=0}^{\infty}\gd^t
\ell^\gs_t(h_t)
\middle|\xi^0
\right]
&=0.
\label{eq:avglossboundmain}
\end{align}
\end{prop*}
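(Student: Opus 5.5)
The plan is to control the posterior $\mu^\gs_t$ under the normal type's equilibrium play through a Hellinger-affinity (Bhattacharyya) bound on the likelihood ratio between $F$ and the normal type's induced signal process. I then deduce \eqref{eq:avglossboundmain} from \eqref{eq:vanishingrep} by a direct comparison of best responses.

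\textbf{Step 1 (likelihood ratio).} Fix $\delta$ and an equilibrium $\gs\in\Gs(\gd,F)$. Write $\ga_t:=\gs^1_t(h_t,\xi^0)$ and let $P^0$ be the signal process conditional on $\xi^0$, so that $P^0(y\mid h_t)=\rho_{\ga_t}(y)$. Since $\rho$ and $f$ have full support, every finite history has positive probability under both $P^0$ and $F$. Hence every history is on path, Bayes' rule applies everywhere, and
\[
\frac{\mu^\gs_t(h_t)}{1-\mu^\gs_t(h_t)}=\frac{\mu_0}{1-\mu_0}L_t(h_t),\qquad L_t(h_t):=\prod_{s<t}\frac{f(y_s)}{\rho_{\ga_s}(y_s)}.
\]
Consequently $\mu^\gs_t\le\min\{1,\tfrac{\mu_0}{1-\mu_0}L_t\}\le\sqrt{\tfrac{\mu_0}{1-\mu_0}}\sqrt{L_t}$.

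\textbf{Step 2 (uniform geometric decay).} Let $\eta:=1-\max_{\ga\in\Delta(A)}\sum_{y}\sqrt{f(y)\rho_\ga(y)}$. The Bhattacharyya coefficient is continuous in $\ga$ and $\Delta(A)$ is compact, so the maximum is attained. The coefficient equals $1$ only if $\rho_\ga=f$, which is excluded because $\inf_\ga\KL{\rho_\ga}{f}>0$. Hence $\eta>0$, and $\eta$ depends only on $f$ and $\rho$, not on $\gs$ or $\gd$. Then
\[
\bE_{P^0}\bigl[\sqrt{L_{t+1}}\mid h_t\bigr]=\sqrt{L_t}\sum_y\sqrt{f(y)\rho_{\ga_t}(y)}\le(1-\eta)\sqrt{L_t},
\]
so $\bE_\gs[\sqrt{L_t}\mid\xi^0]\le(1-\eta)^t$. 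Combining with Step 1,
\[
\bE_\gs\Bigl[(1-\gd)\sum_t\gd^t\mu^\gs_t\Bigm|\xi^0\Bigr]\le\sqrt{\tfrac{\mu_0}{1-\mu_0}}\,\frac{1-\gd}{1-\gd(1-\eta)}.
\]
This bound is uniform over equilibria and tends to $0$ as $\gd\to1$, which gives \eqref{eq:vanishingrep}.

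\textbf{Step 3 (payoff loss).} In a Nash equilibrium, player 2 at $h_t$ (on path) best responds to her predictive belief $q_t=\mu^\gs_t f+(1-\mu^\gs_t)\rho_{\ga_t}$. Let $M:=\max|\tilde v|$. For every $\gb$,
\[
\bigl|\bE_{q_t}[\tilde v(\gb,y)]-v(\ga_t,\gb)\bigr|\le M\mu^\gs_t\|f-\rho_{\ga_t}\|_1\le 2M\mu^\gs_t.
\]
Take $b^*$ maximizing $v(\ga_t,\cdot)$. Optimality of $\gs^2_t(h_t)$ against $q_t$ means its $q_t$-payoff is at least that of $b^*$. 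Each of these two $q_t$-payoffs differs from the corresponding $v(\ga_t,\cdot)$ value by at most $2M\mu^\gs_t$, so $\ell^\gs_t\le 4M\mu^\gs_t$. Therefore \eqref{eq:avglossboundmain} follows from \eqref{eq:vanishingrep}.

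\textbf{Main obstacle.} The key difficulty is uniformity over equilibria. The normal type may choose $\ga_t$ adaptively so as to mimic $f$ as closely as possible. A naive log-likelihood (KL-drift) argument yields only a drift bound, whereas what is needed is an expectation bound on $\mu_t$ itself. The Hellinger supermartingale in Step 2 handles this cleanly, because the per-period contraction factor $1-\eta$ holds for every $\ga\in\Delta(A)$. The step I would double-check is that $\eta>0$ follows from the separation hypothesis via compactness.
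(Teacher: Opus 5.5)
Your proof is correct, and it takes a different route from the paper's for the belief bound. Your Step 3 is essentially the paper's own argument for the payoff-loss bound: write the predictive belief as $\mu^\gs_t f+(1-\mu^\gs_t)\rho_{\ga_t}$ and obtain $\ell^\gs_t\le 4\|\tilde v\|_\infty\mu^\gs_t$.

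The difference is in how you get uniform decay of $\bE_\gs[\mu^\gs_t\mid\xi^0]$. The paper works with the log-likelihood ratio $\sum_{\tau<t}\log\bigl(f(y_\tau)/\rho_{\ga_\tau}(y_\tau)\bigr)$ itself:
\begin{itemize}
\item the KL gap $\zeta$ from the separation hypothesis gives conditional drift at most $-\zeta$ per period;
\item full support of $f$ and $\rho$ gives increments bounded by some $K$;
\item Azuma--Hoeffding bounds the probability that the log-ratio exceeds $-\zeta t/2$ by $\exp(-\zeta^2t/(32K^2))$;
\item splitting on this event and using $\mu^\gs_t\le\min\{1,\varphi R_t\}$, with $\varphi=\mu_0/(1-\mu_0)$ and $R_t$ the likelihood ratio, yields $\bE_\gs[\mu^\gs_t\mid\xi^0]\le C_2\expo^{-c_1t}$.
\end{itemize}

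You instead take the order-$1/2$ moment of the likelihood ratio. It contracts by the deterministic factor $1-\eta$ at every history, whatever $\ga_t$ the normal type picks, and you combine it with $\min\{1,x\}\le\sqrt{x}$. This gives the explicit bound $\bE_\gs[\mu^\gs_t\mid\xi^0]\le\sqrt{\mu_0/(1-\mu_0)}\,(1-\eta)^t$ in one line. You need no concentration inequality, no bounded-increment constant, and no conversion from a tail bound to an expectation bound.

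The cost is that your rate $\eta$ comes from the KL hypothesis only qualitatively, through the compactness step you flag. That step is sound. By Cauchy--Schwarz, $\sum_y\sqrt{f(y)\rho_\ga(y)}=1$ iff $\rho_\ga=f$. Separation excludes this for every $\ga$, and continuity on the compact set $\Delta(A)$ bounds the coefficient away from $1$. The paper's rate, by contrast, is stated directly in terms of $\zeta$ and $K$.

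Your closing remark slightly undersells the KL route. A bare drift bound is indeed not enough. But the paper pairs the drift with Azuma--Hoeffding, and that combination does deliver an expectation bound on $\mu^\gs_t$.
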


\Cref{prop:vanish} shows that under any i.i.d.\ commitment-separating misspecification, conditional on a patient normal type, both the short-lived players' posterior belief in the commitment type and their best-response loss against the normal type have negligible discounted average. Intuitively, commitment separation yields a uniform entropy gap $\zeta>0$ between every attainable one-period signal distribution and the short-lived players' subjective one-period commitment-type signal distribution $f$. Hence, conditional on the normal type, the log-likelihood ratio in favor of the commitment type against the normal type has per-period drift at most $-\zeta$ and bounded increments. An Azuma--Hoeffding bound \citep{azuma1967weighted} then yields \eqref{eq:vanishingrep}. Thus, in discounted average the short-lived players' predictive signal distributions are close in total variation to the predictive signal distributions generated by the normal type's equilibrium strategy alone. This yields \eqref{eq:avglossboundmain}.

\cref{prop:vanish} does not mean that reputation effects collapse under slight i.i.d.\ commitment-separating misspecifications. \cref{eg:missiid} has shown otherwise, because the histories generated by the long-lived player's off-path commitment-action deviation remain sufficiently plausible under these misspecifications. This contrasts with the impermanent-reputation result of \cite{cripps2004imperfect}, who show that, for each fixed discount factor, the short-lived player's posterior belief in commitment vanishes over time and that both continuation behavior and continuation payoffs converge to their complete-information counterparts. The distinction is between attainability of the subjective commitment-type signal process and plausibility of commitment behavior. Attainability asks whether some feasible behavior of the normal type exactly generates the subjective commitment-type signal process. Plausibility asks whether that process assigns sufficient likelihood to histories generated by the normal type's commitment-action deviation. A process may fail the first test but pass the second. This is the case for i.i.d.\ commitment separation: no mixed action $\ga$ induces $f$, yet when $f$ is close to $\hat\rho$, it still assigns high likelihood to histories generated by the commitment action $\hat\ga$.

This distinction recasts the textbook intuition of reputation effects. The usual explanation of the reputation bound runs through type belief (see, \egg \citealp[Section 2.4]{mailath2015reputations}): if short-lived players do not best respond as if the commitment action were being played, then histories generated by the normal type's commitment-action deviation raise their posterior belief in the commitment type until such responses become optimal. Under i.i.d.\ commitment separation, however, the normal type cannot reproduce the subjective commitment-type signal distribution, and posterior belief in the commitment type has negligible discounted average conditional on the normal type. Thus the relevant discipline is not whether commitment can be statistically replicated, but whether the subjective model keeps reputation-building histories likely.





\section{Conclusion}

Reputation effects show that an arbitrarily small positive prior probability of a commitment type can raise every equilibrium payoff of a sufficiently patient long-lived player arbitrarily close to the commitment payoff. This paper studies whether that conclusion survives misspecification of the signal process conditional on the commitment type. The canonical reputation bound is robust to an arbitrarily small prior probability of commitment, but not necessarily to subjective perturbations of the commitment type's statistical content.

My results identify a sharp distinction between two notions of statistical closeness. Negligible entropy-rate loss along the commitment-action deviation preserves the reputation bound. By contrast, agreement on every fixed finite horizon provides no such protection: even the long-lived player's highest equilibrium payoff can collapse to the complete-information benchmark. The operative object is therefore the likelihood assigned to long reputation-building histories, not correctness of finite-history samples.

The entropy-rate criterion separates attainability of commitment-type signals from plausibility of commitment behavior. The subjective commitment-type process need not be attained by any strategy of the normal type, yet reputation effects can survive when histories generated by the commitment-action deviation remain sufficiently likely under that process.

The broader lesson for dynamic games is that robustness of equilibrium predictions should be assessed along strategically relevant histories and at the horizon that determines incentives. Perturbations that recede beyond every fixed finite horizon can still overturn equilibrium predictions.

\pagebreak

\appendix
\appendixtitleon

\pagebreak

\begin{appendices}

\section{Omitted details}
\label{sec:od0}

\subsection{Additional entropy-rate robustness examples}
\label{sec:examplerobustness}

In this Appendix, I present several additional examples omitted from \cref{sec:entropyrobustness}.

\begin{eg*}[Transient misspecification]\normalfont
The short-lived players may be mistaken only about the first finitely many signals generated by the commitment type, perhaps because they initially need to calibrate their benchmark.

Suppose that for some fixed positive integer $S$, the short-lived players' subjective commitment-type signal process has an arbitrary full-support marginal on $Y^S$, but after period $S$ uses the correct conditional distribution $\hat\rho$ in every period. Therefore, writing $F_S$ as the distribution on $Y^S$, for each $T>S$, $F_T(h_T) = F_S(h_S)\prod_{t=S}^{T-1}\hat\rho(y_t)$. Then $D(\widehat P_T \parallel F_T)=D(\widehat P_S \parallel F_S)$ for all $T\ge S$. Consequently,  $D(\widehat P_T \parallel F_T)=O(1)$ as $T \ra \infty$,\footnote{
That is, there exist constants $C<\infty$ and $T_0$ such that, for all $T\ge T_0$, $D(\widehat P_T \parallel F_T) \le C$.} and hence $\overline D(\widehat P \parallel F)=0$.

Thus finite-history initial mistakes, however large, are payoff-irrelevant for the reputation bound. Economically, the likelihood penalty from the initial mistakes is a fixed cost, not a per-period cost. For a patient long-lived player, a bounded finite initial loss vanishes in entropy-rate terms. 
\end{eg*}

\begin{eg*}[Model uncertainty]\normalfont
The short-lived players' misspecification may take the form of entertaining several competing theories of what commitment looks like. This example shows that as long as one theory with positive prior weight is the correct one, model uncertainty is harmless for reputation effects.

Suppose the short-lived players believe that, conditional on the commitment type, a signal process is first drawn from a finite collection $\{F^1,\ldots,F^m\}\subseteq \Delta(H)$ with prior probabilities $\gl_1,\ldots,\gl_m>0$, where $\sum_{i=1}^m\gl_i=1$, and $F^1=\widehat P$ is the true commitment-type signal process. After observing histories, the short-lived players update their posterior weights over these candidate models. Their unconditional subjective commitment-type signal process is the Bayesian mixture $F=\sum_{i=1}^m\gl_iF^i$. Their subjective ex ante probability of history $h_T$ at time $T$ conditional on the commitment type is
\begin{align*}
F_T(h_T)
=
\sum_{i=1}^m\gl_iF^i_T(h_T)
\ge
\gl_1\widehat P_T(h_T).
\end{align*}
Hence $D(\widehat P_T\parallel F_T)\le -\log\gl_1$ for every $T$, and therefore
\begin{align*}
D(\widehat P\parallel F)
=
\limsup_{T\to\infty}\frac{1}{T}D(\widehat P_T\parallel F_T)
=0.
\end{align*}

Therefore, the reputation bound survives. A fixed positive prior weight on the correct commitment-type process is enough to make the likelihood loss sublinear in the history length, and hence irrelevant in entropy-rate terms.
\end{eg*}

\begin{eg*}[Mildly misspecified serial dependence]\normalfont
The short-lived players may mistakenly believe that the signals generated by the commitment type are serially dependent.

Specifically, for every $\ve>0$, let their subjective commitment-type signal process $F^\ve$ be first-order Markov, with full-support initial distribution $\nu^\ve\in\Delta(Y)$ and full-support transition kernel $\pi^\ve(\cdot| y)\in\Delta(Y)$ for each $y\in Y$. Then, for every history $h_T=(y_0,\dots,y_{T-1})$,  $F^\ve_T(h_T) = \nu^\ve(y_0) \prod_{t=1}^{T-1}\pi^\ve(y_t | y_{t-1})$. Assume that for small $\ve$, the misspecification is mild in the sense that the transition probabilities are uniformly close to the true commitment-type signal distribution in relative entropy: $\eta_\ve
:=
\sup_{y\in Y}
D(\hat\rho \parallel \pi^\ve(\cdot | y))
\ra 0$ as $\ve \ra 0$. By the chain rule for relative entropy,
\begin{align*}
D(\widehat P_T \parallel F^\ve_T)
&=
D(\hat\rho \parallel \nu^\ve)
+
\sum_{t=1}^{T-1}
\bE_{\widehat P}
\left[
D(\hat\rho \parallel \pi^\ve(\cdot| y_{t-1}))
\right] \\
&
\le
D(\hat\rho \parallel \nu^\ve)
+
(T-1)\eta_\ve.
\end{align*}
Dividing by $T$ and taking the limit superior as $T \ra \infty$, $\overline D(\widehat P \parallel F^\ve)
\le \eta_\ve \ra 0$ as $\ve \ra 0$.

Therefore, the reputation bound is robust to mildly misspecified serial dependence in commitment-type signals. Such misspecification creates a persistent but mild likelihood penalty for histories generated by the commitment-action deviation.
\end{eg*}

\subsection{Additional finite-history discontinuity examples}
\label{sec:falsiexample}

In this Appendix, I present several additional examples omitted from \cref{sec:collapse}.


\begin{eg*}[Finite complaint budget]\label{eg:boundedcomplaints}\normalfont
Suppose that the short-lived players believe that the commitment type can generate at most $n$ signals $\tilde y$ in its entire history. In the product-choice game in \cref{sec:motivatingexample}, consumers may endow a firm committed to high effort with a finite ``complaint budget,'' with $\tilde y$ representing a low-quality signal.

Specifically, fix $\tilde y \in Y$. For each $n=1,2,\dots$, let $F^n$ agree with $\widehat P$ until $n$ realizations of $\tilde y$ have occurred. After any history containing $n$ realizations of $\tilde y$, the subjective process assigns probability zero to another realization of $\tilde y$ and renormalizes the other probabilities according to
$\hat\rho$:
\begin{align*}
F^n_t(y| h_t)
=
\begin{cases}
0,
&
\quad \text{if } y=\tilde y
\text{ and } h_t \text{ contains } n \text{ realizations of } \tilde y,\\[0.6em]
\dfrac{\hat\rho(y)}{1-\hat\rho(\tilde y)},
&
\quad \text{if } y \neq \tilde y
\text{ and } h_t \text{ contains } n \text{ realizations of } \tilde y,\\[1em]
\hat\rho(y),
&
\quad \text{otherwise}.
\end{cases}
\end{align*}
For every fixed horizon $T$, if $n\ge T$, then the
restriction cannot bind within the first $T$ periods, so $F^n_T=\widehat P_T$; therefore, as $n \ra \infty$, misspecification vanishes for every finite horizon.


For each fixed $n$, the $(n+1)$-st realization of $\tilde y$ occurs in finite expected time under every normal-type strategy. Indeed, since monitoring has full support, $\underline\rho_{\tilde y} := \min_{a\in A}\rho(\tilde y| a)>0$. The expected time until the $(n+1)$-st realization of $\tilde y$ is at most $(n+1)/\underline\rho_{\tilde y} < \infty$. At that time the commitment type is falsified. Thus, letting $\theta^n$ be the first time of this event, $F^n(\theta^n < \infty) = 0$ and $\sup_{\gs^1}\bE_{\gs^1}[\theta^n | \xi^0]<\infty$. \cref{prop:falsification} then applies. A finite complaint budget, however large, destroys reputation effects.
\end{eg*}

\begin{eg*}[No lapse after a clean record]\normalfont
The short-lived players may treat any lapse as impossible under commitment after a sufficiently long clean record. In the product-choice game of \cref{sec:motivatingexample}, consumers may allow a firm committed to high effort to generate both high-quality signals $y_h$ and low-quality signals $y_l$ in early periods, but after a long enough streak of high-quality signals they regard any subsequent low-quality signal as inconsistent with commitment.

Specifically, fix two signals $y_h$ and $y_l$. For each $n=1,2,\ldots$, let $F^n$ agree with the true commitment-type process until the history contains $n$ consecutive realizations of $y_h$. After any history ending in $n$ consecutive realizations of $y_h$, the subjective commitment-type signal process assigns probability zero to $y_l$ and renormalizes the remaining probabilities according to $\hat\rho$:
\begin{align*}
F^n_t(y| h_t)
=
\begin{cases}
0,
&
\text{if } y=y_l
\text{ and } h_t \text{ ends in } n \text{ consecutive realizations of } y_h,\\[0.6em]
\dfrac{\hat\rho(y)}{1-\hat\rho(y_l)},
&
\text{if } y\ne y_l
\text{ and } h_t \text{ ends in } n \text{ consecutive realizations of } y_h,\\[1em]
\hat\rho(y),
&
\text{otherwise}.
\end{cases}
\end{align*}
For every fixed horizon $T$, if $n>T$, the restriction cannot bind within the first $T$ periods, so $F^n_T=\widehat P_T$; therefore, as $n \ra \infty$, misspecification vanishes for every finite horizon.

For each fixed $n$, consider the block $(y_h,\ldots,y_h,y_l)\in Y^{n+1}$. Because monitoring has full support, this block occurs in finite expected time under every normal-type strategy. When it occurs, the final realization of $y_l$ has zero probability under $F^n$, so the commitment type is falsified.
Consequently, letting $\theta^n$ be the first time of this event, $F^n(\theta^n < \infty) = 0$ and $\sup_{\gs^1}\bE_{\gs^1}[\theta^n | \xi^0]<\infty$. \cref{prop:falsification} then applies. Thus, a no-lapse-after-clean-record misspecification, however mild and hence forgiving initially, destroys
reputation effects.
\end{eg*}

\begin{eg*}[Periodic audit standards]\label{eg:audit}\normalfont
The short-lived players may incorrectly think that the commitment type never fails formal audits or salient checkpoints. 

Fix $\tilde y\in Y$ and interpret it as an audit failure. For each $n=1,2,\ldots$, let the subjective commitment-type process $F^n$ agree with the
true commitment-type process except at audit dates $n,2n,3n,\ldots$. At each
audit date, the subjective process assigns probability zero to $\tilde y$ and
renormalizes the remaining probabilities according to $\hat\rho$:
\begin{align*}
F^n_t(y| h_t)
=
\begin{cases}
0,
&
\quad \text{if } t\in\{n,2n,3n,\ldots\}\text{ and } y=\tilde y,\\[0.6em]
\dfrac{\hat\rho(y)}{1-\hat\rho(\tilde y)},
&
\quad \text{if } t\in\{n,2n,3n,\ldots\}\text{ and } y\ne\tilde y,\\[1em]
\hat\rho(y),
&
\quad \text{otherwise}.
\end{cases}
\end{align*}
For every fixed horizon $T$, if $n>T$, there is no audit date before $T$, so $F^n_T=\widehat P_T$; therefore, as $n \ra \infty$, misspecification vanishes for every finite horizon.


For each fixed $n$, an audit failure eventually occurs in finite expected time under every normal-type strategy. Indeed, at each audit date the probability of $\tilde y$ under any normal-type action is at least $\underline\rho_{\tilde y}:=\min_{a\in A}\rho(\tilde y| a)>0$. The expected calendar time to falsification is at most $1+n/\underline\rho_{\tilde y}<\infty$. Then, as in the above examples, letting $\theta^n$ be the time of the first audit failure, $F^n(\theta^n < \infty) = 0$ and $\sup_{\gs^1}\bE_{\gs^1}[\theta^n| \xi^0]<\infty$. \cref{prop:falsification} applies. Even increasingly infrequent audit dates destroy reputation effects.
\end{eg*}

\begin{eg*}[Block reviews]\label{eg:block}\normalfont
The short-lived players may incorrectly impose a quality-control standard over review windows. This is natural in applications such as ratings, inspection windows, and quarterly performance.

Specifically, fix $\tilde y\in Y$ and interpret it as a low-quality signal. For each $n=2,3,\ldots$, choose $r_n \in \{0,\dots,n-1\}$. For a history $h_t=(y_0,\dots,y_{t-1})$ with $t\ge n-1$, let $N^n_t(h_t):=\sum_{s=t-n+1}^{t-1} \mathbf{1}_{\{y_s=\tilde y\}}$ be the number of realizations of $\tilde y$ among the last $n-1$ signals. Let $F^n$ agree with the true commitment-type process unless the current history contains exactly $r_n$ realizations of $\tilde y$ in the last $n-1$ periods. At such histories, the subjective process assigns probability zero to another realization of $\tilde y$ and renormalizes the other probabilities according to $\hat\rho$:
\begin{align*}
F^n_t(y|h_t)
=
\begin{cases}
0, & \text{if } t\ge n-1,\ y=\tilde y,\text{ and }N^n_t(h_t)=r_n,\\[0.6em]
\dfrac{\widehat\rho(y)}{1-\widehat\rho(\tilde y)}, & \text{if } t\ge n-1,\ y\neq \tilde y,\text{ and }N^n_t(h_t)=r_n,\\[0.8em]
\hat\rho(y), & \text{otherwise.}
\end{cases}
\end{align*}
Thus the subjective model rules out a low-quality signal whenever the preceding $n-1$ signals already contain exactly $r_n$ low-quality signals. In particular, it makes the block $z^n$ impossible under commitment. For every fixed horizon $T$, if $n>T$, no length-$n$ window is observed within the first $T$ periods, so $F^n_T=\widehat P_T$; therefore, as $n \ra \infty$, misspecification vanishes for every finite horizon.


For each fixed $n$, fix some signal $y^0\ne \tilde y$ and consider the block
\begin{align*}
z^n
:=
(\underbrace{y^0,\ldots,y^0}_{n-1-r_n\text{ times}},
\underbrace{\tilde y,\ldots,\tilde y}_{r_n+1\text{ times}})
\in Y^n.
\end{align*}
When the final signal in this block is reached, the preceding $n-1$ signals contain
exactly $r_n$ realizations of $\tilde y$, so the subjective process assigns probability
zero to that final realization of $\tilde y$. Since monitoring has full support, $\underline\rho:=\min_{a\in A,y\in Y}\rho(y|a)>0$, 
so under every normal-type strategy, the probability that the next disjoint block of length $n$ equals $z^n$ is at least $\underline\rho^n$. Therefore the first occurrence time of $z^n$ has finite expectation, bounded above by $n/\underline\rho^n$. When $z^n$ occurs, the commitment type is falsified. Then, as in the above examples, letting $\theta^n$ be the first occurrence time of $z^n$, $F^n(\theta^n < \infty) = 0$ and $\sup_{\gs^1}\bE^{\gs^1}[\theta^n| \xi^0]<\infty$. \cref{prop:falsification} therefore applies. Reputation effects collapse even under vanishingly misspecified block reviews that treat each finite history correctly.
\end{eg*}

\section{Proofs}
\label{sec:proofs}

\subsection{Proof of \cref{prop:entropy-rate-implies-discounted}}

I first prove the lower bound.  Let $\gam_T:=\KL{\widehat P_T}{F_T}$. Since the $T$-period marginal is obtained from the $(T+1)$-period marginal by dropping the last signal, marginalization cannot increase relative entropy, so $\gam_T$ is nondecreasing in $T$. For each $T\ge2$, set $\gd_T:=1-1/T$. Then
\begin{align*}
D_{\gd_T}(\widehat P\parallel F)
&=
(1-\gd_T)^2
\sum_{s=1}^{\infty}
\gd_T^{s-1} \gam_s \\
&\ge
(1-\gd_T)^2
\sum_{s=T}^{\infty}
\gd_T^{s-1} \gam_s \\
&\ge
(1-\gd_T)^2
\sum_{s=T}^{\infty}
\gd_T^{s-1} \gam_T  \\
&=
\frac{1}{T}
\left(1-\frac{1}{T}\right)^{T-1} \gam_T .
\end{align*}
Take a subsequence $(T_j)_{j=0}^\infty \nearrow \infty$ such that $\gam_{T_j}/T_j \ra \overline D(\widehat P \parallel F)$ as $j \ra \infty$. Since $(1-1/T_j)^{T_j-1}\ra \expo^{-1}$, it follows that
\begin{align*}
\limsup_{\gd\to1}D_\gd(\widehat P \parallel F)
\ge
\expo^{-1}\overline D(\widehat P \parallel F).
\end{align*}

It remains to prove the upper bound. Fix \(\eta>0\). By \eqref{eq:entropyrate} and by definition of the limit superior, there exists $T_0$ such that for every $T\ge T_0$, $\KL{\widehat P_T}{F_T}
\le (\overline D(\widehat P\parallel F)+\eta)T$. Hence
\begin{align*}
&~ D_\gd(\widehat P\parallel F) \\[0.25em]
=&~
(1-\gd)^2
\sum_{T=1}^{T_0-1}
\gd^{T-1}
\KL{\widehat P_T}{F_T} +
(1-\gd)^2
\sum_{T=T_0}^{\infty}
\gd^{T-1}
\KL{\widehat P_T}{F_T}\\
\le&~ 
(1-\gd)^2
\sum_{T=1}^{T_0-1}
\gd^{T-1}
\KL{\widehat P_T}{F_T} +
(
\overline D(\widehat P\parallel F)+\eta
)
(1-\gd)^2
\sum_{T=T_0}^{\infty}
\gd^{T-1}T \\
\le&~ 
\underbrace{(1-\gd)^2
\sum_{T=1}^{T_0-1}
\gd^{T-1}
\KL{\widehat P_T}{F_T}}
_
{ \ra 0, \textnormal{ as } \gd \ra 1}
+
(
\overline D(\widehat P\parallel F)+\eta
)
\underbrace{(1-\gd)^2
\sum_{T=1}^{\infty}
\gd^{T-1}T}_{ = 1}.
\end{align*}
Therefore
\begin{align*}
\limsup_{\gd\to1}
D_\gd(\widehat P\parallel F)
\le
\overline D(\widehat P\parallel F)+\eta.
\end{align*}
Letting $\eta \searrow 0$ completes the proof.

\subsection{Proof of \cref{thm:entropy-rate-robustness}}

Fix $F\in\Delta(H)$, $\gd\in(0,1)$, and an equilibrium $\gs\in\Gs(\gd,F)$. Fix $w<R$. I first consider the case $m_{\hat\ga}(w) < \infty$. Let 
$\underline u_{\hat\ga} := \min_{b\in B}u(\hat\ga,b)$.   Suppose first that $\overline D(\widehat P\parallel F)<\infty$. Consider the deviation in which the normal type plays $\hat\ga$ at every history. Let $Q^\gs_T$ be the short-lived players' unconditional subjective probability distribution over length-$T$ histories under $\gs$. Since the short-lived players assign prior probability $\mu_0$ to the commitment type, for every $h_T \in Y^T$, $Q^\gs_T(h_T) \ge \mu_0 F_T(h_T)$. Therefore, whenever $\widehat P_T(h_T)>0$,
\begin{align*}
\log
\frac{\widehat P_T(h_T)}
{Q_T^\gs(h_T)}
\le
\log
\frac{\widehat P_T(h_T)}
{\mu_0F_T(h_T)} =
-\log\mu_0
+
\log
\frac{\widehat P_T(h_T)}
{F_T(h_T)}.
\end{align*}
Taking expectations under $\widehat P_T$ yields
\begin{align*}
\KL{\widehat P_T}{Q_T^\gs}
&=
\sum_{h_T\in Y^T}
\widehat P_T(h_T)
\log
\frac{\widehat P_T(h_T)}
{Q_T^\gs(h_T)}
\\
&\le
\sum_{h_T\in Y^T}
\widehat P_T(h_T)
\left(
-\log\mu_0
+
\log
\frac{\widehat P_T(h_T)}
{F_T(h_T)}
\right)
\\
&=
-\log\mu_0
+
\KL{\widehat P_T}{F_T}.
\end{align*}
Let $q_t(\cdot|h_t)$ be the short-lived player's one-period subjective predictive distribution over $y_t$ after history $h_t$. By the chain rule for relative entropy,
\begin{align*}
\KL{\widehat P_T}{Q^\gs_T}
=
\sum_{t=0}^{T-1}
\bE_{\widehat P}
\left[
\KL{\hat\rho}{q_t(\cdot|h_t)}
\right].
\end{align*}
Thus, for every $T$, $\sum_{t=0}^{T-1}
\bE_{\widehat P}
\left[
\KL{\hat\rho}{q_t(\cdot|h_t)}
\right]
\le
-\log\mu_0+\KL{\widehat P_T}{F_T}$. Multiplying by $(1-\gd)^2\gd^{T-1}$ and summing over $T=1,2,\dots$ yields
\begin{align}\label{eq:tonelli}
(1-\gd)
\sum_{t=0}^{\infty}
\gd^t
\bE_{\widehat P}
\left[
\KL{\hat\rho}{q_t(\cdot|h_t)}
\right]
\le
-(1-\gd)\log\mu_0
+
D_\gd(\widehat P\parallel F).
\end{align}
Call period $t$ bad after history $h_t$ if 
$\phi_{\hat\ga}(q_t(\cdot|h_t))<w$, and let $I_t(h_t)
:=
\mathbf 1_{\{\phi_{\hat\ga}(q_t(\cdot|h_t))<w\}}$. Whenever \(I_t(h_t)=1\), $
\KL{\hat\rho}{q_t(\cdot|h_t)}
\ge
m_{\hat\ga}(w)$. Since relative entropy is nonnegative, it follows pointwise that $m_{\hat\ga}(w)I_t(h_t)
\le
D\left(
\hat\rho
\parallel
q_t(\cdot| h_t)
\right)$. Taking expectations under $\widehat P$, multiplying by $(1-\gd)\gd^t$, and summing over $t$ gives
\begin{align}
m_{\hat\ga}(w)
(1-\gd)
\sum_{t=0}^{\infty}
\gd^t
\bE_{\widehat P}[I_t]
\le
(1-\gd)
\sum_{t=0}^{\infty}
\gd^t
\bE_{\widehat P}
\left[
D\left(
\hat\rho
\parallel
q_t(\cdot | h_t)
\right)
\right].
\label{eq:bad-period-entropy-bound}
\end{align}
Combining \eqref{eq:tonelli} with \eqref{eq:bad-period-entropy-bound} and using $m_{\hat\ga}(w)>0$ yields
\begin{align*}
(1-\gd)
\sum_{t=0}^{\infty}
\gd^t
\bE_{\widehat P}[I_t]
\le
\frac{
-(1-\gd)\log\mu_0
+
D_\gd(\widehat P\parallel F)
}{
m_{\hat\ga}(w)
}.
\end{align*}
On non-bad histories, every best response of player 2 gives player 1 payoff at least $w$ from playing $\hat\ga$. On bad histories, player 1's payoff from playing $\hat\ga$ is at least
$\underline u_{\hat\ga}$. Therefore the payoff from the deviation to persistent play of $\hat\ga$ is at least $w
-
\max(0, w-\underline u_{\hat\ga})
(1-\gd)
\sum_{t=0}^{\infty}
\gd^t
\bE_{\widehat P}[I_t]$. Since $\gs$ is an equilibrium,
\begin{align}\label{eq:interimbound}
U(\gs;\gd,F)
\ge
w
-
\max\left(0,w-\underline u_{\hat\ga}\right)
\frac{
-(1-\gd)\log\mu_0
+
D_\gd(\widehat P\parallel F)
}{
m_{\hat\ga}(w)
}.
\end{align}
Taking the infimum over equilibria and then the limit inferior as $\gd\to1$ gives
\begin{align*}
\liminf_{\gd\to1}
\underline W(\gd;F)
\ge
w
-
\frac{
\limsup_{\gd\to1}D_\gd(\widehat P\parallel F)
}{
m_{\hat\ga}(w)
}
\max(0, w-\underline u_{\hat\ga}).
\end{align*}
By \cref{prop:entropy-rate-implies-discounted}, \eqref{eq:modulus} follows. 

Suppose next that $\overline D(\widehat P\parallel F)=\infty$. If $w - \underline u_{\hat \ga} > 0$, then the desired inequality \eqref{eq:modulus} becomes vacuous and trivially holds. Finally, if $w - \underline u_{\hat \ga} \le 0$, then in any equilibrium $\gs$, because the normal type can secure payoff $\underline u_{\hat \ga}$ by deviating to persistently play $\hat \ga$, 
\begin{align*}
U(\gs;\gd,F) \ge \underline u_{\hat \ga} \ge w =
w
-
\frac{
\overline D(\widehat P\parallel F)
}{
m_{\hat\ga}(w)
}
\max(0, w-\underline u_{\hat\ga}),
\end{align*}
where the equality uses the convention $0 \times \infty = 0$ if $\overline D(\widehat P \parallel F) = \infty$. This again implies \eqref{eq:modulus}.

It remains to consider the case $m_{\hat\ga}(w) = \infty$. Then to prove \eqref{eq:modulus}, I show that the patient normal type's deviation to persistent play of $\hat \ga$ in any equilibrium gives him a payoff that is at least $w$. Because $m_{\hat\ga}(w) = \infty$, no full-support predictive distribution $q$ satisfying $D(\hat\rho\parallel q)<\infty$ can have $\phi_{\hat\ga}(q)<w$. Every equilibrium predictive distribution $q_t(\cdot| h^t)$ has full support because monitoring has full support. Hence no period is bad as defined above, and a deviation to persistent play of $\hat\alpha$ gives the patient normal type a payoff that is at least $w$.

Finally, let $(F^n)_{n=0}^{\infty}$ satisfy $D(\widehat P \parallel F^n) \ra 0$. Applying the preceding bound to each $F^n$, taking the limit inferior over $n$, then letting $w \nearrow R$ proves \eqref{eq:entropyrobust}.

\subsection{Proof of \cref{prop:falsification}}
For each $\gd$, let
\begin{align*}
\overline W^{CI}(\gd)
:=
\sup_{\gs\in\Gs^{CI}(\gd)}
U^{CI}(\gs;\gd).
\end{align*}
Fix $\gd\in(0,1)$ and an equilibrium $\gs\in\Gs(\gd,F)$. Since monitoring has full support, every finite history has positive probability conditional on the normal type. Once $\theta$ occurs, the observed history has zero probability under the subjective commitment-type process \(F\), and hence Bayes' rule assigns posterior probability zero to the commitment type. The continuation game is therefore the complete-information continuation game, so the normal type's continuation payoff is at most \(\overline W^{CI}(\gd)\).

Let $
\overline u:=\max_{a\in A,b\in B}u(a,b)$ and 
$\underline u:=\min_{a\in A,b\in B}u(a,b)$. Then
\begin{align*}
U(\gs;\gd,F)
&\le
\bE_\gs \!\left[
(1-\gd)\sum_{t=0}^{\theta-1}\gd^t\overline u
+
\gd^\theta \overline W^{CI}(\gd)
\middle|\xi^0
\right]\\[0.25em]
&= \bE_\gs\!\left[
(1-\gd^\theta)\overline u
+
\gd^\theta\overline W^{CI}(\gd)
\middle|\xi^0
\right] \\
&= \bE_\gs\![
\overline W^{CI}(\gd)
+
(1-\gd^\theta)
(
\overline u-\overline W^{CI}(\gd)
) 
|\xi^0
] \\
&\le \bE_\gs[
\overline W^{CI}(\gd)
+
(1-\gd^\theta)
(
\overline u - \underline u
) 
|\xi^0
] \\
&= \overline W^{CI}(\gd)
+
(\overline u-\underline u)
\bE_\gs\![
1-\gd^\theta
|\xi^0
].
\end{align*}
Since $1-\gd^\theta\le (1-\gd)\theta$,
\begin{align*}
\sup_{\gs\in\Gs(\gd,F)}
\bE_\gs[
1-\gd^\theta
|\xi^0
]
\le
(1-\gd)
\sup_{\gs^1}
\bE_{\gs^1}[\theta\mid\xi^0]
\ra 0, \quad \text{ as } \gd \ra 1.
\end{align*}
Therefore
\begin{align*}
\limsup_{\gd\to1}
\sup_{\gs\in\Gs(\gd,F)}
U(\gs;\gd,F)
\le
\limsup_{\gd\to1}
\overline W^{CI}(\gd)
=
\overline W^{CI},
\end{align*}
as desired.

\subsection{Proof of \cref{thm:fd-collapse}}

I first introduce an essential lemma.

\begin{lem*}
\label{lem:forbidden-block-processes}
Fix $\tilde y\in Y$. For each $n \ge 2$, let $z^n := (\tilde y,\ldots,\tilde y)\in Y^n$, and let $\theta^n$ be the first time at which $z^n$ appears as $n$ consecutive signals. Consider the construction $F^n \equiv (F^n_t)_t$ given by 
\begin{align}\label{eq:constructionexact}
F^n_t(y|h_t)
=
\begin{cases}
0,
&
\text{if }y=\tilde y\text{ and last }n-1\text{ signals in }h_t\text{ are }\tilde y,\\[1mm]
\dfrac{\hat\rho(y)}{1-\hat\rho(\tilde y)},
&
\text{if }y\ne \tilde y\text{ and last }n-1\text{ signals in }h_t\text{ are }\tilde y,\\[3mm]
\hat\rho(y),
&
\text{otherwise.}
\end{cases}
\end{align}
This construction satisfies: 
\begin{enumerate}\itemsep0em
\item[\textup{(i)}] $F_T^n=\widehat P_T$ for every $T<n$.
\item[\textup{(ii)}] $z^n$ occurs with zero probability under $F^n$, so $F^n(\theta^n<\infty)=0$.

\item[\textup{(iii)}] It holds that
\begin{align*}
\sup_{\gs^1}
\bE_{\gs^1}[\theta^n\mid\xi^0]
\le
\frac{n}{(\min_{a\in A}\rho(\tilde y|a))^n}
<
\infty.
\end{align*}
\end{enumerate}
\end{lem*}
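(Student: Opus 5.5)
The plan is to verify the three properties one at a time, directly from the transition rule \eqref{eq:constructionexact}. First I would check that $F^n$ is well defined. At each history $h_t$, the conditional $F^n_t(\cdot|h_t)$ is a probability distribution on $Y$. In the "otherwise" case it equals $\hat\rho$. In the restricted case the weights are $\hat\rho(y)/(1-\hat\rho(\tilde y))$ for $y\neq\tilde y$, which sum to one; here $\hat\rho(\tilde y)<1$ because $|Y|\ge2$ and monitoring has full support. By the Ionescu--Tulcea theorem these conditionals determine a unique $F^n\in\Delta(H)$, and $F^n_T(h_T)=\prod_{t<T}F^n_t(y_t|h_t)$.

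For (i), note that the restriction can bind at period $t$ only if $h_t$ contains at least $n-1$ signals, that is, $t\ge n-1$. The rule therefore alters the conditional of $y_t$ only for $t\ge n-1$. For $T<n$, every factor in the product with $t\le T-1\le n-2$ equals $\hat\rho(y_t)$, so $F^n_T=\widehat P_T$. (Reading this the other way gives the horizon claim used in \cref{thm:fd-collapse}: for fixed $T$, $F^n_T=\widehat P_T$ whenever $n>T$.)

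For (ii), I would take any $t\ge n-1$ and consider the event $\{\theta^n=t+1\}$, which is the event that $z^n$ first completes at $y_t$. On this event the last $n-1$ signals of $h_t$ are all $\tilde y$, and $y_t=\tilde y$. The rule assigns such a $y_t$ conditional probability zero, so $F^n(\theta^n=t+1)=0$. Taking the countable union over $t$ gives $F^n(\theta^n<\infty)=0$.

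For (iii), fix any normal-type strategy $\gs^1$ together with the short-lived players' strategy; the latter does not affect signals, which depend only on player 1's actions. Set $\underline\rho:=\min_a\rho(\tilde y|a)>0$. I would partition time into disjoint blocks $\{kn,\dots,kn+n-1\}$ for $k=0,1,\dots$. Conditional on any history up to time $kn$, and whatever mixed actions are chosen, each signal in the block equals $\tilde y$ with conditional probability at least $\underline\rho$. Chaining these conditionals, the whole block equals $z^n$ with probability at least $\underline\rho^n$. Hence the index $K$ of the first block equal to $z^n$ is stochastically dominated by a geometric random variable with success probability $\underline\rho^n$, giving $\bE[K+1]\le\underline\rho^{-n}$. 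Since $\theta^n\le (K+1)n$, this yields $\bE_{\gs^1}[\theta^n|\xi^0]\le n/\underline\rho^n$ uniformly in $\gs^1$. The main step needing care is this domination argument. I would make it precise by bounding $\Pr(K\ge k)\le(1-\underline\rho^n)^k$, using induction on $k$ and conditioning on the history at time $kn$, and then summing the tail probabilities.
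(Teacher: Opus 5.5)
Your proposal is correct and follows the paper's proof essentially step for step. For (i) and (ii), you use that the restriction can bind only from period $n-1$ onward and assigns zero probability to the $\tilde y$ that would complete $z^n$. For (iii), you use the same disjoint-block argument as the paper: you bound $\Pr(\theta^n > nk)$ by $(1-\underline\rho^n)^k$ and sum the tails to obtain $n/\underline\rho^n$. Your extra checks only make explicit what the paper leaves implicit. These are that the conditionals are well defined because $\hat\rho(\tilde y)<1$, and the countable-union argument giving $F^n(\theta^n<\infty)=0$.
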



\begin{proof}[Proof of \cref{lem:forbidden-block-processes}]
Consider $F^n$ defined by \eqref{eq:constructionexact} for each $n \ge 2$. Here, $z^n$ arises with zero probability under $F^n$. If $T<n$, no length-$T$ history can contain $z^n$, so the restriction never binds and $F_T^n=\widehat P_T$. This proves (i) and (ii). For (iii), consider disjoint blocks of length $n$. Define $\underline\rho := \min_{a\in A}\rho(\tilde y|a)>0$. Conditional on any history and any sequence of realized actions by the normal type, the probability that the next block equals $z^n$ is at least \(\underline\rho^n\). Therefore, for every normal-type strategy and every integer $k\ge0$, $
\bP_{\gs^1}(\theta^n>nk\mid\xi^0)
\le
(1-\underline\rho^n)^k$. It follows that
\begin{align*}
\sup_{\gs^1}
\bE_{\gs^1}[\theta^n\mid\xi^0]
\le
n\sum_{k=0}^{\infty}(1-\underline\rho^n)^k
=
\frac{n}{\underline\rho^n},
\end{align*}
as was to be shown.
\end{proof}

Let $(F^n)_{n=2}^\infty$ be the sequence of processes defined by \eqref{eq:constructionexact}. By \cref{lem:forbidden-block-processes}\textup{(i)}, for each $T$, $F_T^n=\widehat P_T$ for all $n>T$. Fix $n$. By \cref{lem:forbidden-block-processes}(ii), $F^n(\theta^n<\infty)=0$. By \cref{lem:forbidden-block-processes}(iii), $\sup_{\gs^1}
\bE_{\gs^1}[\theta^n|\xi^0]
<
\infty$. Thus the hypotheses of \cref{prop:falsification} hold for \(F^n\). Therefore
\begin{align*}
\limsup_{\gd\to1}
\sup_{\gs\in\Gs(\gd,F^n)}
U(\gs;\gd,F^n)
\le
\overline W^{CI}.
\end{align*}
For each $n$, relabeling $F^n$ as $F^{n-1}$ then gives the sequence $(F^n)_{n=1}^\infty$ that yields \cref{thm:fd-collapse}.

\subsection{Proof of \cref{prop:vanish}}
Fix $\gd\in(0,1)$ and an equilibrium $\gs\in\Gs(\gd,F)$. When no risk of ambiguity arises, I drop the superscript $\gs$ in $\mu^\gs_t$ and $\ell^\gs_t$. For every history $h_t\in Y^t$, write the normal type's strategy as $
\ga_t(h_t):=\gs^1_t(h_t,\xi^0)\in \Delta(A)$, and the short-lived player's strategy as $\gb_t(h_t):=\gs^2_t(h_t)\in\Delta(B)$. 


Because $F$ is i.i.d.\ commitment-separating, there exists a full-support \(f\in\Delta(Y)\) such that $F_T(h_T)=\prod_{\tau=0}^{T-1}f(y_\tau)$ and $\inf_{\ga\in\Delta(A)}\KL{\rho_\ga}{f}>0$. For each \(h_t=(y_0,\ldots,y_{t-1})\), define the normal-type and subjective commitment likelihoods
\begin{align*}
L_t(h_t)
:=
\prod_{\tau=0}^{t-1}
\rho_{\ga_\tau(h_\tau)}(y_\tau),
\qquad
\hat L_t(h_t)
:=
\prod_{\tau=0}^{t-1}
f(y_\tau).
\end{align*}
Bayes' rule gives
\begin{align*}
\frac{\mu_t(h_t)}{1-\mu_t(h_t)}
=
\frac{\mu_0}{1-\mu_0}
\frac{\hat L_t(h_t)}{L_t(h_t)}.
\end{align*}

By commitment separation, there exists \(\zeta>0\) such that, for every \(\ga\in\Delta(A)\), $
\KL{\rho_\ga}{f}\ge \zeta$. For each \(\tau=0,1,\ldots\), define
$\Lambda_\tau := \log(f(y_\tau)/\rho_{\ga_\tau(h_\tau)}(y_\tau))$. Conditional on \(h_\tau\) and \(\xi^0\), the signal \(y_\tau\) is distributed according to \(\rho_{\ga_\tau(h_\tau)}\). Hence
$\bE_\gs[\Lambda_\tau| h_\tau,\xi^0]
=
-\KL{\rho_{\ga_\tau(h_\tau)}}{f}
\le
-\zeta$. Because \(f\) and \(\rho_a\) have full support and \(A,Y\) are finite, there is \(K<\infty\) such that \(|\Lambda_\tau|\le K\) for all \(\tau\). Therefore $M_t
:=
\sum_{\tau=0}^{t-1}
(
\Lambda_\tau
-
\bE_\gs[\Lambda_\tau| h_\tau,\xi^0])$
is a martingale with uniformly bounded increments. The Azuma--Hoeffding inequality gives
\begin{align*}
\bP^\gs\left(
\sum_{\tau=0}^{t-1}\Lambda_\tau
\ge -\frac{\zeta t}{2}
\middle|\xi^0
\right)
\le
\exp\left(
-\frac{\zeta^2 t}{32K^2}
\right).
\end{align*}
Set $c:=\zeta^2/(32K^2)$. Since
\begin{align*}
\log\frac{\hat L_t(h_t)}{L_t(h_t)}
=
\sum_{\tau=0}^{t-1}\Lambda_\tau,
\end{align*}
there exist constants $C_1<\infty$ and $c_1 \in (0, \min(c, \zeta/2) ]$, independent of $\gd$, $\gs$, and $t$, such that
\begin{align*}
\bP^\gs\left(
\frac{\hat L_t(h_t)}{L_t(h_t)}
\ge
C_1\exp(-c_1t)
\middle|\xi^0
\right)
\le
C_1\exp(-c_1t).
\end{align*}
Let
\begin{align*}
R_t(h_t):=\frac{\hat L_t(h_t)}{L_t(h_t)},
\qquad
\varphi:=\frac{\mu_0}{1-\mu_0}.
\end{align*}
Then
\begin{align*}
\mu_t(h_t)
=
\frac{\varphi R_t(h_t)}{1+\varphi R_t(h_t)}
\le
\min\{\varphi R_t(h_t),1\}.
\end{align*}
Define \(G_t:=\{R_t(h_t)<C_1\exp(-c_1t)\}\). Then
\begin{align*}
\bE_\gs[\mu_t(h_t)\mid\xi^0]
&\le
\varphi C_1\exp(-c_1t)
+
C_1\exp(-c_1 t) =
C_1(\varphi+1)\exp(-c_1t).
\end{align*}
Thus, for some \(C_2<\infty\), $\bE_\gs[\mu_t(h_t)\mid\xi^0]
\le
C_2\exp(-c_1t)$ for every \(t\), uniformly over \(\gd\) and \(\gs\). Consequently,
\begin{align*}
0
\le
\sup_{\gs\in\Gs(\gd,F)}
\bE_\gs\left[
(1-\gd)\sum_{t=0}^{\infty}\gd^t\mu_t(h_t)
\middle|\xi^0
\right]
&\le
C_2(1-\gd)\sum_{t=0}^{\infty}\gd^t\exp(-c_1t)\\
&=
C_2\frac{1-\gd}{1-\gd\exp(-c_1)}
\ra 0
\end{align*}
as $\gd \ra 1$. This proves \eqref{eq:vanishingrep}.

It remains to prove \eqref{eq:avglossboundmain}. Fix any equilibrium. For each \(h_t\), let \(q_t(\cdot|h_t)\) be player 2's posterior predictive distribution over \(y_t\). Then $q_t(\cdot|h_t)
=
\mu_t(h_t)f
+
(1-\mu_t(h_t))\rho_{\ga_t(h_t)}$. Hence $
\|q_t(\cdot|h_t)-\rho_{\ga_t(h_t)}\|_{TV}
\le
\mu_t(h_t)$. For every \(b\in B\),
\begin{align*}
\left|
\sum_{y\in Y}q_t(y|h_t)\tilde v(b,y)
-
\sum_{y\in Y}\rho_{\ga_t(h_t)}(y)\tilde v(b,y)
\right|
\le
2\|\tilde v\|_\infty
\|q_t(\cdot|h_t)-\rho_{\ga_t(h_t)}\|_{TV}.
\end{align*}
Since \(\gb_t(h_t)\in\BR^2(q_t(\cdot|h_t))\), this implies that, for every \(b\in B\),
\begin{align*}
v(\ga_t(h_t),\gb_t(h_t))
\ge
v(\ga_t(h_t),b)
-
4\|\tilde v\|_\infty
\|q_t(\cdot|h_t)-\rho_{\ga_t(h_t)}\|_{TV}.
\end{align*}
Thus $\ell_t(h_t)
\le
4\|\tilde v\|_\infty
\|q_t(\cdot|h_t)-\rho_{\ga_t(h_t)}\|_{TV}
\le
4\|\tilde v\|_\infty \mu_t(h_t)$. Therefore
\begin{align*}
\sup_{\gs\in\Gs(\gd,F)}
\bE_\gs\left[
(1-\gd)\sum_{t=0}^{\infty}\gd^t
\ell_t(h_t)
\middle|\xi^0
\right]
\le
4\|\tilde v\|_\infty
\sup_{\gs\in\Gs(\gd,F)}
\bE_\gs\left[
(1-\gd)\sum_{t=0}^{\infty}\gd^t
\mu_t(h_t)
\middle|\xi^0
\right].
\end{align*}
Taking the limit superior as \(\gd\to1\) and using \eqref{eq:vanishingrep} gives \eqref{eq:avglossboundmain}.

\end{appendices}

\pagebreak

\addtocontents{toc}{\protect\setcounter{tocdepth}{1}}
\renewcommand\bibname{References}

{
\bibliographystyle{chicago}
\bibliography{references}
}

\pagebreak

\section*{Online Appendix: Approximate falsification}

In this Online Appendix, I show that exact falsification is not necessary for the collapse conclusion of \cref{thm:fd-collapse}. For any equilibrium $\gs\in\Sigma(\gd,F)$ and history $h_t=(y_0,\dots,y_{t-1})$, let $L_t^\gs(h_t) := \prod_{s=0}^{t-1}\rho_{\gs_s^1(h_s,\xi^0)}(y_s)$ be the likelihood of $h_t$ under the normal type's equilibrium strategy and let
\begin{align*}
O_t^\gs(h_t)
:=
\frac{\mu_0}{1-\mu_0}
\frac{F_t(h_t)}{L_t^\gs(h_t)}
\end{align*}
be the posterior odds of the commitment type against the normal type at this history.

\begin{prop*}[Approximate falsification]\label{prop:approxfal}
Let $F\in\Delta(H)$. Suppose for every $K> 0$ and $\gd$ sufficiently close to one, each equilibrium $\gs$ admits a stopping time $\theta^\gs_\gd: H \ra \{0,1,\dots\} \cup \{ \infty \}$ and a measurable event $G^\gs_\gd  \subseteq
\{h \in H: \theta^\sigma_\delta(h)<\infty\}$ such that the following hold uniformly over all equilibria. First, as $\gd\to1$,
\begin{align}\label{eq:eventualone}
\bP_\gs(G^\gs_\gd|\xi^0) &\ra 1, \\ \label{eq:earlyarrival} 
\bE_\gs[1-\gd^{\theta^\gs_\gd}| \xi^0] &\ra 0.
\end{align}
Moreover, on $G^\gs_\gd$,
\begin{align}
\label{eq:lowodds}
O^\gs_{\theta^\gs_\gd}
(h_{\theta^\gs_\gd})
&\le \exp\left\{- \frac{K}{1-\gd} \right\}\!.
\end{align}
Then \eqref{eq:payoff-collapse} holds.
\end{prop*}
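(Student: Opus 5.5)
The plan follows the proof of \cref{prop:falsification}, with the exact posterior of zero at $\theta$ replaced by a posterior that is exponentially small in $1/(1-\gd)$. The continuation game after $\theta^\gs_\gd$ is then close enough to complete information that the normal type's continuation payoff is near $\overline W^{CI}(\gd)$. Fix $K>0$, $\gd$ close to one, and an equilibrium $\gs$, and write $\theta=\theta^\gs_\gd$ and $G=G^\gs_\gd$. I would split the normal type's payoff into three parts:
\begin{itemize}
\item on $G^c$, bound everything by $\overline u$, contributing at most $(\overline u-\underline u)\bP_\gs(G^c|\xi^0)\to0$ by \eqref{eq:eventualone};
\item on $G$, bound stage payoffs before $\theta$ by $\overline u$, contributing at most $(\overline u-\underline u)\bE_\gs[1-\gd^\theta|\xi^0]\to0$ by \eqref{eq:earlyarrival};
\item on $G$, the continuation value at $h_\theta$.
\end{itemize}
The whole task is to bound this continuation value by roughly $\overline W^{CI}(\gd)$ plus a vanishing error.

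For the continuation at $h_\theta$ with odds $O_\theta\le \exp\{-K/(1-\gd)\}=:\eta$, first note that posterior odds can only be multiplied by the per-period likelihood ratio $f/\rho_{\ga}$ of the continuation. Even without full support of $F$, the subjective commitment likelihood of any continuation path is at most one. The normal-type likelihood of a $t$-period continuation path is at least $\underline\rho^{\,t}$, where $\underline\rho:=\min_{a,y}\rho(y|a)$. Hence the posterior on the commitment type at $\theta+t$ is at most $\eta\,\underline\rho^{-t}$. Choose a horizon $N=N(\gd)$ with $(1-\gd)N\to\infty$ but $N\log(1/\underline\rho)\le K/(2(1-\gd))$, for example $N=\lfloor K/(2(1-\gd)\log(1/\underline\rho))\rfloor$. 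Then within the next $N$ periods the posterior stays below $\exp\{-K/(2(1-\gd))\}$. So over this window player 2's predictive distributions are within that amount, in total variation, of the normal type's induced distributions. This makes her play an $\ve_\gd$-best response to the normal type with $\ve_\gd\to0$ superexponentially, as in the proof of \eqref{eq:avglossboundmain} in \cref{prop:vanish}.

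The remaining step, which I expect to be the main obstacle, is converting ``player 2 plays $\ve$-best responses to the normal type for $N$ periods with $(1-\gd)N$ bounded below'' into ``the normal type's continuation payoff is at most $\overline W^{CI}(\gd)$ plus a vanishing error.'' The bound $\overline W^{CI}$ is a sup over exact equilibria, so I need an upper-hemicontinuity or self-generation argument. The plan is to use a truncated-horizon argument. The discounted weight beyond $N$ periods is $\gd^N\approx \exp\{-K/(2\log(1/\underline\rho))\}$, which is small only if $K$ is large. This is why the hypothesis quantifies over every $K$: send $K\to\infty$ after $\gd\to1$.

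Within the window, the normal type's behavior together with player 2's $\ve_\gd$-best responses forms an $\ve$-equilibrium of the complete-information game truncated at $N$ with some terminal continuation values. For that I would invoke, or prove by the standard \citet{abreu1990toward} self-generation argument, that in this product-structure, full-support game the highest $\ve$-equilibrium payoff of the patient long-lived player converges to $\overline W^{CI}$ as $\ve\to0$. Concretely, the long-lived player's incentives in the truncated window are exactly the complete-information incentives up to the terminal values. So the normal type's average payoff over the window is at most the largest value self-generated with $\ve$-slack. That value converges to $\overline W^{CI}$ by upper hemicontinuity of the equilibrium payoff correspondence in $\ve$, which holds because the action sets are finite.

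Combining the three pieces, $\limsup_{\gd\to1}\overline W(\gd;F)\le \overline W^{CI}+C\exp\{-K/(2\log(1/\underline\rho))\}$ for each $K$. Letting $K\to\infty$ gives \eqref{eq:payoff-collapse}.
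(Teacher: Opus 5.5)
Your architecture is the paper's. It has the same three-way split of the normal type's payoff and the same odds-growth bound: subjective continuation likelihood at most one, normal-type likelihood at least $\underline\rho^{\,t}$. It uses a window of length of order $K/(1-\gd)$ on which player 2 is an $\ve$-best responder to the normal type, and removes a truncation error of order $\gd^N(\overline u-\underline u)$ by taking $K$ large. (The paper fixes $\eta$, picks $L$ with $\expo^{-L}(\overline u-\underline u)\le\eta/2$, and then takes $K>L|\log\underline\rho|$.)

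The gap is in the step you flag as the main obstacle, and the justification you give would not deliver it. There are three problems.
\begin{itemize}
\item The window's terminal values are arbitrary numbers in $[\underline u,\overline u]$, not complete-information equilibrium values. They can supply harsher punishments than any equilibrium continuation. So at a fixed $\gd$ there is no reason for the window's value to be within $\gd^N(\overline u-\underline u)$ of the largest value self-generated with $\ve$-slack.
\item You need a bound that is uniform in $\gd$, since $\gd\to1$ while $\ve_\gd\to0$. Upper hemicontinuity in $\ve$ at a fixed $\gd$ says nothing about $\limsup_{\gd\to1}$.
\item Finiteness of $A$ and $B$ does not give the compactness you invoke, because the incentive penalties range over the unbounded set $\mathbb R^Y_+$.
\end{itemize}

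The paper closes the gap with a $\gd$-free object, the maximal score with slack. Define $S(a;\gb,z):=u(a,\gb)-\sum_y\rho(y|a)z(y)$. Then $\kappa_\ve$ is the supremum of $\max_a S(a;\gb,z)$ over $z\in\mathbb R^Y_+$ and $(\ga,\gb)$ with $\ell_2(\ga,\gb)\le\ve$ and $\operatorname{supp}\ga\subseteq\arg\max_a S(a;\gb,z)$. The argument then has three steps.
\begin{itemize}
\item Shifting continuation values by $M_{s+1}-\kappa_\ve$ shows that the maximal continuation value $M_s$ at depth $s$ of the window satisfies $M_s\le(1-\gd)\kappa_\ve+\gd M_{s+1}$. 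This holds for every $\gd$ and every choice of terminal values, hence $M_0\le\kappa_\ve+\gd^N(\overline u-\underline u)$.
\item $\kappa_\ve\to\kappa_0$ as $\ve\searrow0$ by compactness. Here full support bounds the penalties of near-maximizers: $\sum_y z(y)\le(\overline u-\underline u+1)/\underline\rho$.
\item $\kappa_0=\overline W^{CI}$ by the score characterization in \citet{fudenberg1994efficiency}. The direction used here, $\kappa_0\le\overline W^{CI}$, is the nontrivial attainability direction.
\end{itemize}
With this lemma your superexponentially small $\ve_\gd$ is harmless, because $\kappa_\ve$ is nondecreasing in $\ve$, and the rest of your argument goes through.

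A minor slip: with your choice of $N$, $(1-\gd)N\to K/(2\log(1/\underline\rho))$. So the requirement $(1-\gd)N\to\infty$ cannot hold for fixed $K$; it is your later limit $K\to\infty$ that actually controls the tail.
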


\cref{prop:approxfal} gives the collapse conclusion from three conditions. First, \eqref{eq:eventualone} states that in the patient limit, conditional on the normal type, with probability tending to one, the game reaches histories at which the commitment type is a very poor explanation. Second, \eqref{eq:earlyarrival} states that in the patient limit, conditional on the normal type, those histories are reached early relative to discounting, so the payoffs before they are reached are negligible for the long-lived player. Finally, \eqref{eq:lowodds} states that at those histories, posterior odds of commitment are exponentially small relative to discounting; once this happens, the continuation game is effectively complete information, and the patient normal type cannot obtain more than the complete-information benchmark, up to an arbitrarily small error.

I illustrate \cref{prop:approxfal} with a full-support variant of \cref{eg:streak} below and then prove \cref{prop:approxfal}.

\subsection*{Example}
\label{sec:approxfal}

Consider the following variant of \cref{eg:streak}. Since monitoring has full support, define
$
\underline\rho_{y_l}:=\min_{a\in A}\rho(y_l | a)>0$ and 
$\underline\rho:=\min_{a\in A,y\in Y}\rho(y| a)>0.
$
For each $n=1,2,\dots$, define the short-lived players' subjective commitment-type signal
process $F^n$ as follows: for each $t$,
\begin{align*}
F^n_t(y | h_t)
=
\begin{cases}
\expo^{-t^2}, & \quad \text{if } t\ge n \text{ and } y=y_l,\\
1 - \expo^{-t^2},
& \quad \text{if } t\ge n \text{ and } y= y_h,\\
\hat\rho(y), & \quad \text{otherwise.}
\end{cases}
\end{align*}
For every $T$, if $n>T$, then $F^n_T=\widehat P_T$ so that as $n \ra \infty$, misspecification vanishes for every finite horizon.

This construction is the full-support analogue of \cref{eg:streak}. There, the subjective process assigns probability zero to the falsifying signal after the relevant history; here, the signal $y_l$ always receives positive probability, but after period $n$ that probability becomes exponentially small. Thus the commitment type is never ruled out, but its posterior odds become negligible after histories that are reached early.

I verify the hypotheses of \cref{prop:approxfal}. Fix $n$. Fix $K>0$ and any equilibrium $\gs\in\Sigma(\gd,F^n)$. For each $\gd$, define
\begin{align*}
m_\gd:=\left\lceil \sqrt{\frac{4K}{1-\gd}}\right\rceil .
\end{align*}
Let $\tau^\gs_\gd$ be the first time $t\ge \max\{n,m_\gd\}$ at which $y_t=y_l$, and define
$
\theta^\gs_\gd:=\tau^\gs_\gd+1$ and $G^\gs_\gd:=\{\tau^\gs_\gd\le 2m_\gd\}.
$
The random time $\theta^\gs_\gd$ is a stopping time, because the signal $y_{\tau^\gs_\gd}$ is included in the history $h_{\theta^\gs_\gd}$.

I first verify \eqref{eq:eventualone} and \eqref{eq:earlyarrival}. Under every normal-type strategy, conditional on every history, the probability of observing $y_l$ in the next period is at least $\underline \rho_{y_l}$. Hence, for all $\gd$ sufficiently close to one so that $m_\gd\ge n$,
$
\bE_\gs[\tau^\gs_\gd\mid\xi^0]
\le
m_\gd+1/\underline\rho_{y_l}.
$
Therefore
\begin{align*}
\bP_\gs((G^\gs_\gd)^c\mid\xi^0) =
\bP_\gs(\tau^\gs_\gd>2m_\gd\mid\xi^0) \le
(1- \underline \rho_{y_l})^{m_\gd}
\to0,
\end{align*}
uniformly over equilibria. Also,
\begin{align*}
\bE_\gs[1-\gd^{\theta^\gs_\gd}\mid\xi^0]
\le
(1-\gd)\bE_\gs[\theta^\gs_\gd\mid\xi^0] \le
(1-\gd)\left(m_\gd+1+\frac{1}{\underline\rho_{y_l}}\right)
\to0,
\end{align*}
uniformly over equilibria.

It remains to verify \eqref{eq:lowodds}. Consider any realized history $h_{\theta^\gs_\gd}$ on $G^\gs_\gd$. Since every signal has probability at least $\underline \rho$ under any action,
$
L^\gs_{\theta^\gs_\gd}(h_{\theta^\gs_\gd})
\ge
\underline \rho^{\theta^\gs_\gd}.
$
By construction, because $y_{\tau^\gs_\gd}=y_l$ and $\tau^\gs_\gd\ge n$,
$
F^n_{\theta^\gs_\gd}(h_{\theta^\gs_\gd})
\le
\expo^{-(\tau^\gs_\gd)^2}.
$
Thus, on $G^\gs_\gd$,
\begin{align*}
\log\frac{F^n_{\theta^\gs_\gd}(h_{\theta^\gs_\gd})}
{L^\gs_{\theta^\gs_\gd}(h_{\theta^\gs_\gd})}
\le
-(\tau^\gs_\gd)^2
+
\theta^\gs_\gd|\log\underline \rho| 
&=
-(\tau^\gs_\gd)^2
+
(\tau^\gs_\gd+1)|\log\underline \rho| \\
&\le
-m_\gd^2+(2m_\gd+1)|\log\underline \rho| \\
&\le
-\frac{2K}{1-\gd}
\end{align*}
for all $\gd$ sufficiently close to one. The last inequality follows because
$m_\gd^2\ge 4K/(1-\gd)$ while
$(2m_\gd+1)|\log\underline \rho|=O((1-\gd)^{-1/2})$.
Bayes' rule then gives
\begin{align*}
O^\gs_{\theta^\gs_\gd}(h_{\theta^\gs_\gd})
=
\frac{\mu_0}{1-\mu_0}
\frac{F^n_{\theta^\gs_\gd}(h_{\theta^\gs_\gd})}
{L^\gs_{\theta^\gs_\gd}(h_{\theta^\gs_\gd})} 
\le
\frac{\mu_0}{1-\mu_0}
\exp\left\{-\frac{2K}{1-\gd}\right\} \le
\exp\left\{-\frac{K}{1-\gd}\right\}
\end{align*}
for all $\gd$ sufficiently close to one, uniformly over equilibria.

Therefore, all conditions in \cref{prop:approxfal} are satisfied. Consequently, for every $n$, $\limsup_{\gd\to1} \overline W(\gd; F^n) \le \overline W^{CI}$.

\subsection*{Proof of \cref{prop:approxfal}}

I first state and prove two essential lemmas.

\begin{lem*}\label{lem:tciub}
For every $\eta>0$, there exist $\ve>0$, $L<\infty$, and $\bar\gd_0<1$ such that the following holds. Fix $\gd>\bar\gd_0$ and let $N_\gd:=\lceil L/(1-\gd)\rceil$. Consider a finite-horizon strategy profile of the complete-information game, with horizon $N_\gd$ and terminal continuation payoff bounded between $\underline u$ and $\overline u$ after every history of length $N_\gd$. Suppose that the normal type is sequentially optimal in this strategy profile and that, at every history of length $s<N_\gd$, the short-lived player loses at most $\ve$ relative to playing optimally against the predictive signal distribution conditional on the normal type. Then the normal type's payoff at the initial history is at most $\overline W^{CI}+\eta$.
\end{lem*}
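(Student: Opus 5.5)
I would prove the lemma by a recursive ``score'' bound in the style of \citet{fudenberg1994efficiency}. Backward induction over the $N_\gd$ periods controls the normal type's value at every history by a static programme in which the short-lived player's optimality constraint is relaxed by $\ve$. Since $N_\gd(1-\gd)\ge L$, the terminal payoff carries weight at most $\gd^{N_\gd}\le \expo^{-L}$. The whole argument then reduces to continuity of this programme at $\ve=0$.

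\textbf{Step 1: the static programme.} For $\ve\ge0$, let $k(\ve)$ be the supremum of $u(\ga,\gb)+\bE_{y\sim\rho_\ga}[x(y)]$ over $\ga\in\Delta(A)$, $\gb\in\Delta(B)$ and $x:Y\ra(-\infty,0]$, subject to two constraints.
\begin{enumerate}\itemsep0em
\item[(a)] $\gb$ is an $\ve$-best reply for player 2 against $\rho_\ga$.
\item[(b)] Player 1 is incentive compatible: $u(a,\gb)+\bE_{\rho(\cdot|a)}[x]\ge u(a',\gb)+\bE_{\rho(\cdot|a')}[x]$ for all $a\in\operatorname{supp}\ga$ and all $a'\in A$.
\end{enumerate}

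\textbf{Step 2: the recursion.} Let $V_s$ be the supremum of the normal type's continuation value over histories of length $s$, with $V_{N_\gd}\le\overline u$. At a history $h_s$, write the value as $(1-\gd)u(\ga_s,\gb_s)+\gd\,\bE[w(y)]$, where $w(y)\le V_{s+1}$. Set $x(y):=\gd\,(w(y)-V_{s+1})/(1-\gd)\le0$. Sequential optimality of the normal type gives (b), and the $\ve$-loss hypothesis gives (a). The value then equals $(1-\gd)\bigl(u+\bE x\bigr)+\gd V_{s+1}$, so $V_s\le(1-\gd)k(\ve)+\gd V_{s+1}$. Iterating from $s=0$ to $N_\gd$ yields
\begin{align*}
V_0\le (1-\gd^{N_\gd})\,k(\ve)+\gd^{N_\gd}\,\overline u\le k(\ve)+\expo^{-L}\,(\overline u-\underline u)
\end{align*}
for $\gd$ close to one, using $k(\ve)\ge\underline u$. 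Choosing $L$ large makes the last term at most $\eta/3$.

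\textbf{Step 3: comparison with the complete-information benchmark.} I need two facts.
\begin{enumerate}\itemsep0em
\item[(i)] $k(0)\le\overline W^{CI}$. This is the statement that the Fudenberg--Levine score in direction $+1$ is attained in the patient limit. Because monitoring has full support and a product structure, Nash equilibrium and perfect public equilibrium payoffs coincide \citep[Theorem 5.2]{fudenberg1994efficiency}. By their limit characterization, $\lim_{\gd\to1}$ of the best equilibrium payoff equals $k(0)$. I would invoke this result rather than reprove it.
\item[(ii)] $\limsup_{\ve\to0}k(\ve)\le k(0)$, i.e.\ upper semicontinuity at $\ve=0$. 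Then choosing $\ve$ small gives $k(\ve)\le k(0)+\eta/3$.
\end{enumerate}
Combining Steps 2 and 3 gives $V_0\le\overline W^{CI}+\eta$.

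\textbf{Main obstacle.} I expect fact (ii) to be the main difficulty, because the transfers $x$ are unbounded below and the feasible set is therefore not compact. I would first fix $(\ga,\gb)$. In that case the maximal value of $\bE_{\rho_\ga}[x]$ subject to (b) and $x\le0$ is a linear programme with finitely many constraints. Its value is upper semicontinuous in $(\ga,\gb)$ on the set where it is feasible, by LP duality: the dual variables are multipliers on finitely many IC constraints.

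Since $\Delta(A)\times\Delta(B)$ is compact and the $\ve$-best-reply correspondence is closed and shrinks to $\BR^2$ as $\ve\to0$, a limit of near-maximizers satisfies the $\ve=0$ constraint. This delivers (ii). If the LP-duality step proves delicate, I would fall back on Fudenberg and Levine's own treatment of the relaxed score.
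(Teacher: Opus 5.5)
Your proposal follows the paper's proof essentially step for step. Your $k(\ve)$ is the paper's $\kappa_\ve$ with $x=-z$, since under (b) the objective $u(\ga,\gb)+\bE_{\rho_\ga}[x]$ equals $\max_{a}S(a;\gb,-x)$. Your normalized recursion is the paper's $M_s\le(1-\gd)\kappa_\ve+\gd M_{s+1}$; the paper reaches it by shifting and arguing by contradiction, and your direct normalization by $V_{s+1}$ is slightly cleaner. The tail bound $\gd^{N_\gd}\le\expo^{-L}$, the appeal to \citet{fudenberg1994efficiency} for $k(0)=\overline W^{CI}$, and continuity at $\ve=0$ by compactness are all the same as in the paper.

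The one step that is asserted rather than argued is (ii), the step you flag yourself. ``Upper semicontinuous by LP duality'' is not a reason on its own. The inner programme's constraints vary with $(\ga,\gb)$, including which IC rows are present through $\operatorname{supp}\ga$. Semicontinuity then needs some compactness or Slater-type condition. You also need the limit point to lie in the set on which you claim semicontinuity.

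The ingredient that does the work is full support of monitoring, which you never invoke in Step 3. Once you use it, the obstacle disappears:
\begin{enumerate}\itemsep0em
\item[(1)] A stage-game Nash equilibrium with $x\equiv0$ is feasible, so $k(\ve)\ge k(0)\ge\underline u$. This also justifies the inequality you use at the end of Step 2.
\item[(2)] Near-maximizers therefore satisfy $u(\ga_m,\gb_m)+\bE_{\rho_{\ga_m}}[x_m]\ge\underline u-1$. Because $x_m\le 0$, this gives $\sum_y\rho_{\ga_m}(y)|x_m(y)|\le\overline u-\underline u+1$.
\item[(3)] With $\underline\rho:=\min_{a,y}\rho(y|a)>0$, it follows that $|x_m(y)|\le(\overline u-\underline u+1)/\underline\rho$. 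So $(\ga_m,\gb_m,x_m)$ lies in a compact set.
\item[(4)] Along a convergent subsequence, $\ell_2(\ga,\gb)=0$ by continuity. Constraint (b) passes to the limit because $\operatorname{supp}\ga\subseteq\operatorname{supp}\ga_m$ for large $m$. The paper instead passes $\sum_a\ga_m(a)S(a;\gb_m,z_m)=\max_aS(a;\gb_m,z_m)$ to the limit.
\end{enumerate}
This joint-compactness argument is exactly how the paper proves $\lim_{\ve\searrow0}\kappa_\ve=\kappa_0$. It makes both the LP-duality detour and the fallback to Fudenberg--Levine unnecessary.
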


\begin{proof}[Proof of \cref{lem:tciub}]
For $\alpha\in\Delta(A)$ and $\gb\in\Delta(B)$, define the short-lived player's best-response loss by $\ell_2(\alpha,\gb)
:=
\max_{\gb'\in\Delta(B)}
\mathbb E_{y\sim\rho_\alpha}
[\tilde v(\gb',y)]
-
\mathbb E_{y\sim\rho_\alpha}[\tilde v(\gb,y)]$. For every $a\in A$, $\gb\in\Delta(B)$, and $z\in\mathbb R_+^Y$, define $S(a;\gb,z)
:=
u(a,\gb)
-
\sum_{y\in Y}
\rho(y| a)z(y)$. 
For every $\ve\ge0$, let
\begin{align}
\kappa_\ve
:=
\sup\Biggl\{
r:\;&
\text{there exist }
\alpha\in\Delta(A),\
\gb\in\Delta(B),\
z\in\mathbb R_+^Y
\text{ such that}
\nonumber\\
&
\ell_2(\alpha,\gb)\le\ve,~
r=\max_{a\in A}S(a;\gb,z),~ 
\operatorname{supp}(\alpha)
\subseteq
\arg\max_{a\in A}S(a;\gb,z)
\Biggr\}.
\label{eq:claim1-kappa}
\end{align}

\paragraph{Step 1.} I first show that
\begin{align}
\lim_{\ve \searrow 0}\kappa_\ve
=
\overline W^{CI}.
\label{eq:claim1-kappa-limit}
\end{align}
When $\ve=0$, the program in \eqref{eq:claim1-kappa} is the standard one-dimensional maximal-score program for the complete-information game with direction $1$ \citep{fudenberg1994efficiency}. Indeed, consider a current payoff $r$, current mixed actions $(\alpha,\gb)$, and continuation payoffs $w(y)\le r$. Define
\begin{align*}
z(y)
:=
\frac{\gd}{1-\gd}
\bigl(r-w(y)\bigr)
\ge0.
\end{align*}
Player 1's incentive conditions are
\begin{align}
r
&=
(1-\gd)u(a,\gb)
+
\gd
\sum_{y\in Y}
\rho(y| a)w(y)
&&
\text{for every }
a\in\operatorname{supp}(\alpha),
\label{eq:claim1-score-equality}\\
r
&\ge
(1-\gd)u(a,\gb)
+
\gd
\sum_{y\in Y}
\rho(y| a)w(y)
&&
\text{for every }a\in A.
\label{eq:claim1-score-inequality}
\end{align}
Substituting $w(y)
=
r- z(y)(1-\gd)/\gd$ into \eqref{eq:claim1-score-equality} gives
\begin{align*}
r
&=
(1-\gd)u(a,\gb)
+
\gd
\sum_{y\in Y}
\rho(y| a)
\left(
r-\frac{1-\gd}{\gd}z(y)
\right)\\
&=
(1-\gd)u(a,\gb)
+
\gd r
-
(1-\gd)
\sum_{y\in Y}
\rho(y| a)z(y).
\end{align*}
Subtracting $\gd r$ and dividing by $(1-\gd)$ yields $r
=
u(a,\gb)
-
\sum_{y\in Y}
\rho(y| a)z(y)
=
S(a;\gb,z)$ for every $a\in\operatorname{supp}(\alpha)$. The same calculation applied to \eqref{eq:claim1-score-inequality} gives $r
\ge
S(a;\gb,z)$ for every $a\in A$. Thus, $r
=
\max_{a\in A}
S(a;\gb,z)$ and $
\operatorname{supp}(\alpha)
\subseteq
\arg\max_{a\in A}
S(a;\gb,z)$. The short-lived player's optimality condition is exactly $\ell_2(\alpha,\gb)=0$. Conversely, from any tuple $(\ga, \gb, z)$ satisfying these conditions, one obtains continuation payoffs $w(y) = r- z(y)(1-\gd)/\gd \le r$. Hence $\kappa_0$ is the said maximal score.

Because monitoring has full support, every finite history has positive probability conditional on the normal type. Hence a Nash equilibrium here induces a perfect public equilibrium outcome. With a single long-lived player, the characterization of perfect public equilibrium payoffs in \citet[Theorem 3.1]{fudenberg1994efficiency} implies
\begin{align}
\kappa_0
=
\overline W^{CI}.
\label{eq:claim1-kappa-zero}
\end{align}
It remains to prove
\begin{align}
\lim_{\ve\searrow 0}\kappa_\ve
=
\kappa_0.
\label{eq:claim1-kappa-continuity}
\end{align}
Because the feasible set in \eqref{eq:claim1-kappa} expands with $\ve$, $\kappa_\ve \ge \kappa_0$ for each $\ve>0$. Thus $\lim_{\ve\searrow 0}\kappa_\ve
\ge
\kappa_0$. For the reverse inequality, take any sequence $(\ve_m)_{m=1}^\infty \searrow 0$. For each $m$, choose a tuple 
$(\alpha_m,\gb_m,z_m,r_m)_{m=1}^\infty$ that is feasible in \eqref{eq:claim1-kappa} for $\ve_m$ and satisfies
\begin{align}
r_m
\ge
\kappa_{\ve_m}
-
\frac1m.
\label{eq:claim1-near-maximizer}
\end{align}
Note also that $r_m = \max_a S(a;\gb_m,z_m) \le \overline u$, because $z_m \ge 0$. A mixed Nash equilibrium exists in the one-shot game. At such an equilibrium, taking $z=0$ gives a feasible tuple for $\kappa_0$, whose score is at least $\underline u$. Therefore, $\kappa_{\ve_m}
\ge
\kappa_0
\ge
\underline u$. It follows from \eqref{eq:claim1-near-maximizer} that, for all sufficiently large $m$,
\begin{align}
r_m
\ge
\underline u-1.
\label{eq:claim1-r-below}
\end{align}
Choose any $a_m
\in
\operatorname{supp}(\alpha_m)$. The support condition in \eqref{eq:claim1-kappa} gives $r_m
=
u(a_m,\gb_m)
-
\sum_{y\in Y}
\rho(y| a_m)z_m(y)$. Hence, by \eqref{eq:claim1-r-below},
\begin{align}
\sum_{y\in Y}
\rho(y| a_m)z_m(y)
=
u(a_m,\gb_m)-r_m
\le
\bar u-\underline u+1.
\label{eq:claim1-weighted-z}
\end{align}
Since $\rho(y| a_m)\ge\underline\rho$ and $z_m(y)\ge0$ for every $y$, $\underline\rho
\sum_{y\in Y}z_m(y)
\le
\sum_{y\in Y}
\rho(y| a_m)z_m(y)$. Combining this with \eqref{eq:claim1-weighted-z},
\begin{align*}
\sum_{y\in Y}z_m(y)
\le
\frac{\bar u-\underline u+1}{\underline\rho}.
\end{align*}
Thus, $(z_m)_{m=1}^\infty$ is bounded. Since $\Delta(A)$ and $\Delta(B)$ are compact, after passing to a subsequence if necessary, $(\alpha_m, \gb_m, z_m, r_m) \ra (\ga, \gb, z, r)$ as $m \ra \infty$.

The function $\ell_2$ is continuous. Since $\ell_2(\alpha_m,\gb_m) \le \ve_m$, it holds that
\begin{align}
\ell_2(\alpha,\gb)
=
0.
\label{eq:claim1-limit-br}
\end{align}
Similarly, because $r_m
=
\max_{a\in A}
S(a;\gb_m,z_m)$,
continuity and finiteness of $A$ imply
\begin{align}
r
=
\max_{a\in A}
S(a;\gb,z).
\label{eq:claim1-limit-max}
\end{align}
The support condition in \eqref{eq:claim1-kappa}, together with $\sum_{a \in A} \ga_m(a)=1$, implies $\sum_{a\in A}
\alpha_m(a)
S(a;\gb_m,z_m)
=
\max_{a\in A}
S(a;\gb_m,z_m)$. Taking the limit as $m \ra \infty$ gives $\sum_{a\in A}
\alpha(a)
S(a;\gb,z)
=
\max_{a\in A}
S(a;\gb,z)$. Every term $S(a;\gb,z)$ on the left side is weakly below the maximum on the right side. Equality therefore implies
\begin{align}
\operatorname{supp}(\alpha)
\subseteq
\arg\max_{a\in A}
S(a;\gb,z).
\label{eq:claim1-limit-support}
\end{align}
Together, \eqref{eq:claim1-limit-br}, \eqref{eq:claim1-limit-max}, and \eqref{eq:claim1-limit-support} show that $(\alpha,\gb,z,r)$ is feasible for $\kappa_0$. Hence 
$r
\le
\kappa_0$. Together with \eqref{eq:claim1-near-maximizer}, this yields $\limsup_{m\to\infty}
\kappa_{\ve_m}
\le
\kappa_0$. Because the sequence $(\ve_m)_m$ was arbitrary, \eqref{eq:claim1-kappa-continuity} follows. Combining \eqref{eq:claim1-kappa-zero} and \eqref{eq:claim1-kappa-continuity} proves \eqref{eq:claim1-kappa-limit}.

\paragraph{Step 2.} I next establish the finite-horizon payoff bound $\overline W^{CI} + \eta$. Fix $\ve\ge0$, $\gd\in(0,1)$, and a finite-horizon profile satisfying the hypotheses of the lemma, with an arbitrary horizon $N$. For every history $h_s\in Y^s$, let $U_s(h_s)$ denote player 1's continuation payoff in the finite-horizon game, including the terminal continuation payoff at time $N$. Define 
\begin{align}
M_s
:=
\max_{h_s\in Y^s}
U_s(h_s),
\qquad
s=0,\ldots,N.
\label{eq:claim1-M}
\end{align}
Since every terminal continuation payoff is at most $\bar u$,
\begin{align}
M_N
\le
\bar u.
\label{eq:claim1-terminal-bound}
\end{align}

Fix a nonterminal history $h_s$, where $s<N$. Let $\alpha
:=
\sigma_{1,s}(h_s)$ and $\gb
:=
\sigma_{2,s}(h_s)$, and let $w(y)
:=
U_{s+1}(h_s y)$ be player 1's continuation payoff following signal $y$. By the definition of $M_{s+1}$, for every $y\in Y$,
\begin{align}
w(y)
\le
M_{s+1}.
\label{eq:claim1-w-bound}
\end{align}
Set
\begin{align}
c
:=
M_{s+1}-\kappa_\ve,
\qquad
\widetilde w(y)
:=
w(y)-c,
\label{eq:claim1-shift}
\end{align}
and
\begin{align}
r
:=
U_s(h_s)-\gd c.
\label{eq:claim1-r-shift}
\end{align}
By \eqref{eq:claim1-w-bound},
\begin{align}
\widetilde w(y)
\le
\kappa_\ve
\label{eq:claim1-w-shift-bound}
\end{align}
for every $y\in Y$. Sequential optimality of player 1 implies
\begin{align}
U_s(h_s)
&=
(1-\gd)u(a,\gb)
+
\gd
\sum_{y\in Y}
\rho(y| a)w(y)
&&
\text{for every }
a\in\operatorname{supp}(\alpha),
\label{eq:claim1-sequential-equality}\\
U_s(h_s)
&\ge
(1-\gd)u(a,\gb)
+
\gd
\sum_{y\in Y}
\rho(y| a)w(y)
&&
\text{for every }a\in A.
\label{eq:claim1-sequential-inequality}
\end{align}
Subtracting $\gd c$ on both sides in \eqref{eq:claim1-sequential-equality} and \eqref{eq:claim1-sequential-inequality} and using $w(y)
=
\widetilde w(y)+c$ gives
\begin{align}
r
&=
(1-\gd)u(a,\gb)
+
\gd
\sum_{y\in Y}
\rho(y| a)\widetilde w(y)
&&
\text{for every }
a\in\operatorname{supp}(\alpha),
\label{eq:claim1-shifted-equality}\\
r
&\ge
(1-\gd)u(a,\gb)
+
\gd
\sum_{y\in Y}
\rho(y| a)\widetilde w(y)
&&
\text{for every }a\in A.
\label{eq:claim1-shifted-inequality}
\end{align}

I claim that
\begin{align}
r
\le
\kappa_\ve.
\label{eq:claim1-r-upper}
\end{align}
Suppose instead that $r>\kappa_\ve$. By \eqref{eq:claim1-w-shift-bound}, 
$r>\widetilde w(y)$ for every $y\in Y$. Define $z(y)
:= (r-\widetilde w(y))\gd/(1-\gd) \ge 0$. Substituting $\widetilde w(y) = r - z(y) (1-\gd)/\gd$ into \eqref{eq:claim1-shifted-equality} gives
\begin{align*}
r
&=
(1-\gd)u(a,\gb)
+
\gd
\sum_{y\in Y}
\rho(y| a)
\left(
r-\frac{1-\gd}{\gd}z(y)
\right)\\
&=
(1-\gd)u(a,\gb)
+
\gd r
-
(1-\gd)
\sum_{y\in Y}
\rho(y| a)z(y).
\end{align*}
Subtracting $\gd r$ and dividing by $1-\gd$ yields $r
=
S(a;\gb,z)$ for every $a\in\operatorname{supp}(\alpha)$. The same calculation applied to \eqref{eq:claim1-shifted-inequality} gives $r
\ge
S(a;\gb,z)$ for every $a\in A$. Hence, $r
=
\max_{a\in A}
S(a;\gb,z)$ and $
\operatorname{supp}(\alpha)
\subseteq
\arg\max_{a\in A}
S(a;\gb,z)$. By assumption, $\ell_2(\alpha,\gb)
\le
\ve$. Thus $r$ is feasible in the program defining $\kappa_\ve$ in \eqref{eq:claim1-kappa}, contradicting $r>\kappa_\ve$. This proves \eqref{eq:claim1-r-upper}.

Using \eqref{eq:claim1-shift} and \eqref{eq:claim1-r-shift}, \eqref{eq:claim1-r-upper} becomes $U_s(h_s)
\le
(1-\gd)\kappa_\ve
+
\gd M_{s+1}$. Taking the maximum over all histories of length $s$ gives
\begin{align}
M_s
\le
(1-\gd)\kappa_\ve
+
\gd M_{s+1}.
\label{eq:claim1-M-step}
\end{align}
Iterating \eqref{eq:claim1-M-step} from $s=N-1$ to $s=0$ and using \eqref{eq:claim1-terminal-bound},
\begin{align}
M_0
\le
(1-\gd)\kappa_\ve
\sum_{j=0}^{N-1}
\gd^j
+
\gd^N M_N 
=
(1-\gd^N)\kappa_\ve
+
\gd^N M_N
\le
\kappa_\ve
+
\gd^N
\bigl(\bar u-\kappa_\ve\bigr).
\label{eq:claim1-iteration}
\end{align}
Moreover, $
\kappa_\ve
\ge
\kappa_0
\ge
\underline u$, and hence
\begin{align}
M_0
\le
\kappa_\ve
+
\gd^N
\bigl(\bar u-\underline u\bigr).
\label{eq:claim1-finite-bound}
\end{align}

Fix $\eta>0$. By \eqref{eq:claim1-kappa-limit}, choose $\ve>0$ such that
\begin{align}
\kappa_\ve
\le
\overline W^{CI}
+
\frac{\eta}{2}.
\label{eq:claim1-epsilon-choice}
\end{align}
Choose $L<\infty$ such that $\expo^{-L}
(\bar u-\underline u)
\le
\eta/2$. Let $\bar\gd_0<1$ be arbitrary. For every $\gd>\bar\gd_0$, let $N_\gd
:=
\left\lceil
L/(1-\gd)
\right\rceil$. Since $N_\gd
\ge L/(1-\gd)$
and $ \log\gd
\le \gd-1
=
-(1-\gd)$, we have
\begin{align}
\gd^{N_\gd}
\le
\gd^{L/(1-\gd)}
=
\exp\left\{
L
\frac{\log\gd}{1-\gd}
\right\}
\le
\expo^{-L}.
\label{eq:claim1-tail}
\end{align}
Applying \eqref{eq:claim1-finite-bound} with $N=N_\gd$ and using \eqref{eq:claim1-epsilon-choice}--\eqref{eq:claim1-tail},
\begin{align*}
M_0
\le
\kappa_\ve
+
\gd^{N_\gd}
\bigl(\bar u-\underline u\bigr)
\le
\overline W^{CI}
+
\frac{\eta}{2}
+
\expo^{-L}
\bigl(\bar u-\underline u\bigr)
\le
\overline W^{CI}+\eta.
\end{align*}
Since the initial history is unique, $M_0$ is player 1's payoff at the initial history. This proves the lemma.
\end{proof}

\begin{lem*}
\label{lem:small-posterior-ci}
For every $\eta>0$, there exist $K<\infty$ and $\bar\gd<1$ such that the following holds. Fix $\gd>\bar\gd$ and any history after which the short-lived players' posterior odds of the commitment type against the normal type are at most $\exp\{-K/(1-\gd)\}$. Then the normal type's highest continuation equilibrium payoff at that history is at most $\overline W^{CI}+\eta$.
\end{lem*}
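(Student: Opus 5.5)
The plan is to reduce \cref{lem:small-posterior-ci} to \cref{lem:tciub}. Fix $\eta>0$ and let $\ve$, $L$, $\bar\gd_0$ be the constants that \cref{lem:tciub} delivers for $\eta$. Take a history $h$ at which the posterior odds are at most $\exp\{-K/(1-\gd)\}$ and any continuation equilibrium. I would truncate the continuation at horizon $N_\gd=\lceil L/(1-\gd)\rceil$. The normal type's equilibrium continuation payoffs at the truncation histories serve as terminal payoffs, and they lie in $[\underline u,\overline u]$. The normal type is sequentially optimal in this truncated profile. Because monitoring has full support, every finite continuation history is reached with positive probability under the normal type, so Nash optimality at the root gives optimality at every history, as in the reduction to perfect public equilibrium used earlier. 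It therefore suffices to verify the short-lived players' $\ve$-optimality condition at every history of length $s<N_\gd$ after $h$.

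The key estimate controls the growth of the odds over $N_\gd$ periods. Along any continuation history $h h_s$, Bayes' rule gives
\[
O_{s}=O_0\cdot\frac{F(h_s\mid h)}{L^\gs(h_s\mid h)}.
\]
Here $F(h_s\mid h)\le 1$ and $L^\gs(h_s\mid h)\ge\underline\rho^{\,s}$, with $\underline\rho:=\min_{a,y}\rho(y|a)>0$. Hence
\[
O_s\le \exp\Bigl\{-\tfrac{K}{1-\gd}+s|\log\underline\rho|\Bigr\}\le \exp\Bigl\{-\tfrac{K-(L+1)|\log\underline\rho|}{1-\gd}\Bigr\}
\]
for all $s<N_\gd$. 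This bound holds deterministically, for every history and not just on typical paths. Choosing $K>(L+1)|\log\underline\rho|+1$ makes $\mu_s\le O_s\le \expo^{-1/(1-\gd)}$, which tends to zero uniformly as $\gd\to1$.

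Next I would translate small posteriors into small best-response loss, exactly as in the proof of \cref{prop:vanish}. The predictive distribution is $q_s=\mu_s F(\cdot\mid\cdot)+(1-\mu_s)\rho_{\ga_s}$, so $\|q_s-\rho_{\ga_s}\|_{TV}\le\mu_s$. Since $\gb_s\in\BR^2(q_s)$, the loss relative to the normal-type predictive distribution is at most $4\|\tilde v\|_\infty\mu_s$. Choosing $\bar\gd\ge\bar\gd_0$ large enough that $4\|\tilde v\|_\infty \expo^{-1/(1-\gd)}\le\ve$ yields the hypothesis of \cref{lem:tciub}. That lemma then bounds the continuation payoff at $h$ by $\overline W^{CI}+\eta$.

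The main obstacle is the uniform control of the odds over the whole window of length $N_\gd$. It works only because the window length and the exponent $K/(1-\gd)$ scale the same way in $1/(1-\gd)$. A secondary point to check is that the continuation payoff in \cref{lem:tciub} is measured at the root of the truncated game, which is the history $h$ itself; this needs the lemma's statement to be applied to a shifted time index.
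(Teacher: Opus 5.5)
Your proposal is correct and follows essentially the same route as the paper. Both arguments truncate at $N_\gd=\lceil L/(1-\gd)\rceil$ with equilibrium continuation payoffs as terminal values, bound the growth of the odds deterministically by $\underline\rho^{-s}$ so that taking $K$ large relative to $L|\log\underline\rho|$ keeps the posterior uniformly small over the window, convert this into an $\ve$ best-response loss through the total-variation bound, and then invoke \cref{lem:tciub}. The only detail the paper adds is the trivial case $F_t(h_t)=0$: there the odds are zero and stay zero, so your conditional $F(h_s\mid h)$ need not be defined.
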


\begin{proof}[Proof of \cref{lem:small-posterior-ci}]
Let $\underline \rho := \min_{a\in A,y\in Y}\rho(y| a)>0$, $\overline u := \max_{a\in A,b\in B}u(a,b)$, and  $\underline u := \min_{a\in A,b\in B}u(a,b)$. Fix $\eta>0$, and let $\ve>0$, $L<\infty$, and $\bar\gd_0<1$ be given by \cref{lem:tciub}. For each $\gd$, let $N_\gd:=\left\lceil \frac{L}{1-\gd}\right\rceil$. Let $V:=\max\{1,\|\tilde v\|_\infty\}$. Choose $K<\infty$ large enough so that, for all $\gd$ sufficiently close to one,
\begin{align}
\exp\left\{
-\frac{K}{1-\gd}
+
N_\gd|\log\underline\rho|
\right\}
\le
\frac{\ve}{4V}.
\label{eq:lemma2-window}
\end{align}
This is possible because $N_\gd \le
L/(1-\gd)+1$, and hence
\begin{align*}
-\frac{K}{1-\gd}
+
N_\gd|\log\underline\rho|
\le
-\frac{K-L|\log\underline\rho|}{1-\gd}
+
|\log\underline\rho|,
\end{align*}
which converges to $-\infty$ as $\gd\to1$ whenever $K>L|\log\underline\rho|$. Increase $\bar\gd_0$ if necessary so that \eqref{eq:lemma2-window} holds for every $\gd>\bar\gd_0$.

Fix $\gd>\bar\gd_0$, an equilibrium $\gs\in\Sigma(\gd,F)$, and a history $h_t$ such that
\begin{align}
O_t^\gs(h_t)
\le
\exp\left\{
-\frac{K}{1-\gd}
\right\}.
\label{eq:lemma2-initial-odds}
\end{align}
Consider any continuation history $z_s\in Y^s$ of length $s\le N_\gd$ after $h_t$, and denote the resulting history by $h_t z_s$. Let $L_s^{0,\gs}(z_s| h_t)$ denote the likelihood of $z_s$ conditional on the normal type and on $h_t$, when the normal type follows his equilibrium continuation strategy. At every continuation history, the normal type uses some mixed action $\alpha\in\Delta(A)$. Therefore, $\rho_\alpha(y) = \sum_{a\in A}\alpha(a)\rho(y| a) \ge \underline\rho$ for every $y\in Y$, and hence
\begin{align}
L_s^{0,\gs}(z_s| h_t)
\ge
\underline\rho^s.
\label{eq:lemma2-normal-likelihood}
\end{align}
Suppose first that $F_t(h_t)>0$. Define
\begin{align*}
F(z_s| h_t)
:=
\frac{F_{t+s}(h_t z_s)}{F_t(h_t)}.
\end{align*}
Bayes' rule in odds form gives
\begin{align}
O_{t+s}^\gs(h_t z_s)
=
O_t^\gs(h_t)
\frac{F(z_s| h_t)}
{L_s^{0,\gs}(z_s| h_t)}.
\label{eq:lemma2-odds-update}
\end{align}
Since $F(z_s| h_t)\le1$, \eqref{eq:lemma2-normal-likelihood} and \eqref{eq:lemma2-odds-update} imply
\begin{align}
O_{t+s}^\gs(h_t z_s)
\le
O_t^\gs(h_t)\underline\rho^{-s}.
\label{eq:lemma2-odds-growth}
\end{align}

If $F_t(h_t)=0$, then $F_{t+s}(h_t z_s) \le F_t(h_t) = 0$ for every continuation history $z_s$. The posterior odds therefore remain zero, so \eqref{eq:lemma2-odds-growth} also holds in this case.

Combining \eqref{eq:lemma2-window}, \eqref{eq:lemma2-initial-odds}, and \eqref{eq:lemma2-odds-growth}, for every $s\le N_\gd$,
\begin{align}
O_{t+s}^\gs(h_t z_s)
\le
\exp\left\{
-\frac{K}{1-\gd}
\right\}
\underline\rho^{-s}
\le
\exp\left\{
-\frac{K}{1-\gd}
+
N_\gd|\log\underline\rho|
\right\}
\le
\frac{\ve}{4V}.
\label{eq:lemma2-future-odds}
\end{align}
Therefore
\begin{align}
\mu_{t+s}^\gs(h_t z_s) 
=
\frac{O_{t+s}^\gs(h_t z_s)}{1+O_{t+s}^\gs(h_t z_s)} \le O_{t+s}^\gs(h_t z_s)
\le
\frac{\ve}{4V}
\label{eq:lemma2-future-belief}
\end{align}
after every continuation history of length $s\le N_\gd$.

Fix a continuation history $z_s$ with $s<N_\gd$. Let $q\in\Delta(Y)$ be the short-lived player's full predictive distribution over the current public signal, and let $q^0\in\Delta(Y)$ be the predictive distribution conditional on the normal type. Then
\begin{align}
q
=
\mu_{t+s}^\gs(h_t z_s)q^c
+
\left(
1-\mu_{t+s}^\gs(h_t z_s)
\right)q^0
\label{eq:lemma2-predictive-mixture}
\end{align}
for some $q^c\in\Delta(Y)$. If the posterior probability of the commitment type is zero, define $q^c$ arbitrarily. Because $
\|q-q'\|_{\mathrm{TV}}
:= (\sum_{y\in Y}
|q(y)-q'(y)|)/2$, \eqref{eq:lemma2-predictive-mixture} implies
\begin{align}
\|q-q^0\|_{\mathrm{TV}}
=
\mu_{t+s}^\gs(h_t z_s)
\|q^c-q^0\|_{\mathrm{TV}} \le
\mu_{t+s}^\gs(h_t z_s) \le
\frac{\ve}{4V},
\label{eq:lemma2-tv}
\end{align}
where the last inequality follows from \eqref{eq:lemma2-future-belief}.

Let $\gb\in\Delta(B)$ be the short-lived player's equilibrium mixed action at $(h_t z_s)$. Since $\gb$ is optimal under $q$, for every $\gb'\in\Delta(B)$,
\begin{align}
\bE_q[\tilde v(\gb',y)]
-
\bE_q[\tilde v(\gb,y)]
\le
0.
\label{eq:lemma2-br-under-q}
\end{align}
For every bounded function $g:Y\to\mathbb R$,
\begin{align}
|
\bE_q[g(y)]
-
\bE_{q^0}[g(y)]
|
\le
2\|g\|_\infty
\|q-q^0\|_{\mathrm{TV}}.
\label{eq:lemma2-tv-expectation}
\end{align}
Using \eqref{eq:lemma2-br-under-q} and \eqref{eq:lemma2-tv-expectation}, for every $\gb'\in\Delta(B)$,
\begin{align*}
&~\bE_{q^0}[\tilde v(\gb',y)]
-
\bE_{q^0}[\tilde v(\gb,y)] \\
=&~
\bE_{q^0}[\tilde v(\gb',y)]
-
\bE_q[\tilde v(\gb',y)] +
\bE_q[\tilde v(\gb',y)]
-
\bE_q[\tilde v(\gb,y)] +
\bE_q[\tilde v(\gb,y)]
-
\bE_{q^0}[\tilde v(\gb,y)]
\\
\le&~
4\|\tilde v\|_\infty
\|q-q^0\|_{\mathrm{TV}}
\\
\le&~
\ve,
\end{align*}
where the last inequality follows from \eqref{eq:lemma2-tv} and the definition of $V$. Therefore, during the first $N_\gd$ continuation periods after $h_t$, after every continuation history, the short-lived player loses at most $\ve$ relative to a best response to the predictive signal distribution conditional on the normal type.

For every history $h_r$, denote the normal type's equilibrium continuation payoff by
\begin{align*}
U^\gs(h_r)
:=
\bE^\gs\left[
(1-\gd)
\sum_{j=0}^\infty
\gd^j
u(a_{r+j},b_{r+j})
\middle|
h_r,\xi^0
\right]\! \in [\underline u, \overline u].
\end{align*}
Construct an auxiliary finite-horizon complete-information problem of length $N_\gd$, beginning after $h_t$. At every continuation history $z_s\in Y^s$, $s<N_\gd$, let player 1 use $\widehat\gs_{1,s}(z_s)
:=
\gs^1_{t+s}\bigl((h_t z_s),\xi^0\bigr)$, and let the short-lived player use
$\widehat\gs_{2,s}(z_s)
:=
\gs^2_{t+s}(h_t z_s)$. After every terminal continuation history $z^{N_\gd}\in Y^{N_\gd}$, assign the terminal continuation payoff
\begin{align}
g(z^{N_\gd})
:=
U^\gs(h_t z^{N_\gd}).
\label{eq:lemma2-terminal}
\end{align}
The payoff generated by this auxiliary finite-horizon problem at its initial history is exactly the normal type's equilibrium continuation payoff after $h_t$. Indeed,
\begin{align}
U^\gs(h_t)
=
\bE^\gs\left[
(1-\gd)
\sum_{s=0}^{N_\gd-1}
\gd^s
u(a_{t+s},b_{t+s})
+
\gd^{N_\gd}
U^\gs(h_{t+N_\gd})
\biggm|
h_t,\xi^0
\right].
\label{eq:lemma2-truncation}
\end{align}
Finally, player 1 is sequentially optimal in the auxiliary finite-horizon problem because every finite history has positive probability conditional on the normal type. 





The auxiliary profile satisfies all the hypotheses of \cref{lem:tciub}. Its horizon is $N_\gd$, its terminal continuation payoff lies in $[\underline u,\overline u]$, player 1 is sequentially optimal, and every short-lived player loses at most $\ve$ relative to a best response to the predictive signal distribution conditional on the normal type. \cref{lem:tciub} therefore gives $U^\gs(h_t)
\le
\overline W^{CI}+\eta$. Because $\gs$ and $h_t$ were arbitrary, the normal type's highest continuation equilibrium payoff after every history satisfying \eqref{eq:lemma2-initial-odds} is at most $\overline W^{CI}+\eta$.
\end{proof}

With the two lemmas in place, I complete the proof of \cref{prop:approxfal}. Fix $\eta>0$. Let $K<\infty$ and $\bar\gd<1$ be given by \cref{lem:small-posterior-ci}. By the hypotheses of \cref{prop:approxfal}, for every $\gd$ sufficiently close to one and every equilibrium $\gs\in\Sigma(\gd,F)$, there exist a stopping time $\theta^\gs_\gd$ and an event $G^\gs_\gd$ satisfying the conditions stated in the proposition. Let $\overline u:=\max_{a\in A,b\in B}u(a,b)$ and $\underline u:=\min_{a\in A,b\in B}u(a,b)$. Fix $\gd$ sufficiently close to one and an equilibrium $\gs\in\Sigma(\gd,F)$. On the event $G^\gs_\gd$, \eqref{eq:lowodds} and \cref{lem:small-posterior-ci} imply that the normal type's continuation payoff after $h_{\theta^\gs_\gd}$ is at most $\overline W^{CI}+\eta$. On the complementary event $(G^\gs_\gd)^c$, the continuation payoff is at most $\overline u$. Before $\theta^\gs_\gd$, the normal type's payoff is at most $\overline u$ in each period. Therefore
\begin{align*}
U(\gs;\gd,F)
&\le
\bE_\gs\!\left[
(1-\gd)\sum_{t=0}^{\theta^\gs_\gd-1}\gd^t\overline u
+
\gd^{\theta^\gs_\gd}
(
(\overline W^{CI}+\eta) \mathbf{1}_{G^\gs_\gd}
+
\overline u \mathbf{1}_{(G^\gs_\gd)^c}
)
\middle| \xi^0
\right]\!.
\end{align*}
Since $(1-\gd)\sum_{t=0}^{\theta^\gs_\gd-1}\gd^t=1-\gd^{\theta^\gs_\gd}$ and $\gd^{\theta^\gs_\gd}\le1$, it follows that
\begin{align*}
U(\gs;\gd,F)
&\le
\bE_\gs[
(1-\gd^{\theta^\gs_\gd})\overline u
+
\gd^{\theta^\gs_\gd}( \overline W^{CI}+\eta)
+
\gd^{\theta^\gs_\gd}
(\overline u-\underline u)1_{(G^\gs_\gd)^c}
| \xi^0
] \\
&\le
\overline W^{CI}+\eta
+
(\overline u-\underline u)
\bE_\gs[
1-\gd^{\theta^\gs_\gd}
| \xi^0
]
+
(\overline u-\underline u)
\bP_\gs((G^\gs_\gd)^c|\xi^0).
\end{align*}
Taking the supremum over $\gs\in\Sigma(\gd,F)$ and then taking the limit superior as $\gd \ra 1$ yields
\begin{align*}
\limsup_{\gd\to1}\overline W(\gd;F)
\le
\overline W^{CI}+\eta.
\end{align*}
Since $\eta>0$ was arbitrary, the proposition follows.

\end{document}